\theoremstyle{plain}
\newtheorem{theorem}{Theorem}[section]
\newtheorem{lemma}[theorem]{Lemma}
\newtheorem{proposition}[theorem]{Proposition}
\theoremstyle{definition}
\newtheorem{definition}[theorem]{Definition}
\theoremstyle{remark}
\newtheorem{remark}{Remark}
\begin{document}

\title{Toward Solving 2-TBSG Efficiently}

\author{
\name{Zeyu Jia\textsuperscript{a}\thanks{CONTACT \newline Zeyu Jia. Email: jiazy@pku.edu.cn\newline Zaiwen Wen. Email: wenzw@math.pku.edu.cn\newline Yinyu Ye. Email: yinyu-ye@stanford.edu}, Zaiwen Wen\textsuperscript{b} and Yinyu Ye\textsuperscript{c}}
\affil{\textsuperscript{a}School of Mathematical Science, Peking University,
China; \textsuperscript{b}Beijing International Center for Mathematical
Research, Peking University, China; \textsuperscript{c}Department of Management
Science and Engineering, Stanford University, Stanford, CA, USA}
}

\maketitle

\begin{abstract}
2-TBSG is a two-player game model which aims to find Nash equilibriums and is widely utilized in reinforced learning and AI. Inspired by the fact that  the simplex method for solving the
deterministic discounted Markov decision processes (MDPs) is strongly polynomial
independent of the discounted factor, we are trying to answer an open
problem  whether there is a similar algorithm for 2-TBSG. We develop a simplex strategy iteration where one player updates its strategy with a simplex step while the other player finds an optimal
counterstrategy in turn, and a modified simplex strategy iteration. Both of them belong to a class of geometrically converging algorithms.
We establish the strongly polynomial property of these algorithms by considering
a strategy combined from the current strategy and the equilibrium strategy. Moreover, we present a method to transform general 2-TBSGs into special 2-TBSGs where each state has exactly two actions.
\end{abstract}

\begin{keywords}
Markov Decision Process; 2-Player Turn Based Stochastic Game; Simplex Strategy Iteration; Strongly Polynomial Time
\end{keywords}

\section{Introduction}
 Markov decision process (MDP) is a widely used model in machine learning and operations research \cite{bellman1966dynamic}, which establishs basic rules of reinforcement learning. While solving an MDP focuses on maximizing (minimizing) the total reward (cost) for only one player, we consider a broader class of problems, the 2-player turn based stochastic games (2-TBSG) \cite{shapley1953stochastic}, which involves two players with opposite objectives. One player aims to maximize the total reward, and the other player aims to minimize the total reward. MDP and 2-TBSG have many useful applications, see \cite{howard1960dynamic, derman1970finite, puterman2014markov, bertsekas1995dynamic, filar2012competitive, neyman2003stochastic, silver2016mastering}.

 Similar to MDP, every 2-TBSG has its state set and action set, both of which are divided into two subsets for each player, respectively. Moreover, its transition probability matrix describes the transition distribution over the state set conditioned on the current action, and its reward function describes the immediate reward when taking the action.

We use a strategy (policy) to denote a mapping from the state set into the action set. In our setting, we focus on the discounted 2-TBSG, where the reward in later steps is multiplied by a discounted factor. Given strategies (policies) for both players, the total reward is defined to be the sum of all discounted rewards. We solve a 2-TBSG by finding its Nash equilibrium strategy (equilibrium strategy for short), where the first player cannot change its own strategy to obtain a larger total reward, and the second player cannot change its own strategy to obtain a smaller total reward. MDP can be viewed as a special case of 2-TBSG, where all states belong to the first player. In such cases, the equilibrium strategy agrees with the optimal policy of MDP.

MDPs have their linear programming (LP) formulations \cite{derman1970finite}. Hence algorithms solving LP problems can be used to solve MDPs. One of the most commonly used algorithm in MDP is the policy iteration algorithm \cite{howard1960dynamic}, which can be viewed as a parallel counterpart of the simplex method solving the corresponding LP. In paper \cite{10.2307/41412326}, both the simplex method solving the corresponding LP and the policy iteration algorithms have been proved to find the optimal policy in $\mathcal{O}\left(\frac{ml}{1-\gamma}\log\frac{l}{1-\gamma}\right)$, where $m, l, \gamma$ are the number of actions, the number of states and the discounted factor, respectively. Later in \cite{Hansen:2013:SIS:2432622.2432623}, the bound for the policy iteration algorithm is improved by a factor $l$ to $\mathcal{O}\left(\frac{m}{1-\gamma}\log\frac{l}{1-\gamma}\right)$. In \cite{scherrer2013improved}, this bound is improved to $\mathcal{O}\left(\frac{m}{1-\gamma}\log\frac{1}{1-\gamma}\right)$. When the MDP is deterministic (all transition probabilities are either $0$ or $1$), a strongly polynomial bound independent on the discounted factor is proved in \cite{Post:2013:SMS:2627817.2627922} for the simplex policy iteration method (each iteration changes only one action): $\mathcal{O}(m^{2}l^{3}\log^{2}l)$ for uniform discounted MDPs and $\mathcal{O}(m^{3}l^{5}\log^{2}l)$ for nonuniform discounted MDPs.

However, there is no simple LP formulation for 2-TBSGs. The strategy iteration algorithm \cite{rao1973algorithms}, an analogue to the policy iteration, is a commonly used algorithm in finding the equilibrium strategy of 2-TBSGs. It is a strongly polynomial time algorithm first proved in \cite{Hansen:2013:SIS:2432622.2432623} with a guarantee to find the equilibrium in $\mathcal{O}\left(\frac{m}{1-\gamma}\log\frac{l}{1-\gamma}\right)$ iterations if the discounted factor is fixed. When the discounted factor is not fixed, an exponential lower bound is given for the policy iteration in MDP \cite{fearnley2010exponential} and for the strategy iteration in 2-TBSG \cite{friedmann2009exponential}.
 It is an open problem whether there is a strongly  polynomial algorithm whose
 complexity is independent of the discounted factor for 2-TBSG.


Motivated by the strongly polynomial simplex algorithm for solving MDPs, we present a simplex strategy iteration algorithm and a modified
simplex strategy iteration algorithm for the 2-TBSG. In both algorithms each
player updates in turn, where the second player always finds the best
counterstrategy in its turn. In the simplex strategy iteration algorithm the
first player updates its strategy using the simplex algorithm.  In the modified
simplex strategy iteration algorithm, the first player updates the action
leading to the largest improvement after the second player finds the optimal
counterstrategy.  
When the second player is trivial, the 2-TBSG becomes an MDP and the simplex
strategy iteration algorithm can find its solution in strongly polynomial time
independent of the discounted factor, which is a property not possessed by the
strategy iteration algorithm in \cite{Hansen:2013:SIS:2432622.2432623}.

 We also develop a proof technique to prove the strongly polynomial complexity for a class of geometrically converging algorithms. This class of algorithms includes the strategy iteration algorithm, the simplex strategy iteration algorithm, and the modified simplex strategy iteration algorithm. The complexity for the strategy iteration algorithm given in \cite{Hansen:2013:SIS:2432622.2432623} can be recovered by our techniques. Our techniques use a combination of the current strategy and the equilibrium strategy. We establish a bound of ratio between the difference of value from the current strategy to the equilibrium strategy, and the difference of value from the combined strategy to the equilibrium strategy. Using this bound and the geometrically converging property, we can prove that after a certain number of iterations, one action will disappear forever, which leads to strongly polynomial convergence when the discount factor is fixed.
 Although
we have not fully answered the open progblem, 
 our algorithms and analysis point out a possible way for conquering the difficulities.

 Furthermore, 2-TBSG where each state has exactly two actions can be
transformed into a linear complementary problem
\cite{10.1007/978-3-540-69407-6_32}. An MDP where each state has exactly two
actions can be solved by a combinatorial interior point method \cite{ye2005new}.
In this paper we present a way to transform a general 2-TBSG into a 2-TBSG where
each state has exactly two actions. The number of states in this constructed
2-TBSG is $\tilde{\mathcal{O}}(m + l)$ (we use $\tilde{\mathcal{O}}$ to hide
log factors of $l, m$). This result enables the application of both results in
\cite{10.1007/978-3-540-69407-6_32, ye2005new} to general cases.

The rest of this paper is organized as follows. In Section \ref{sec:pre} we present some basic concepts and lemmas of the 2-TBSG. In Section \ref{sec:alg}  we describe the simplex strategy iteration algorithm and the modified simplex strategy iteration algorithm. The proof of complexity of the class of geometrically converging algorithm is given in Section \ref{sec:proof}. The transformation from general 2-TBSGs into special 2-TBSGs is introduced in Section \ref{sec:sp}.

\section{Preliminaries} \label{sec:pre}
\par In this section, we present some basic concepts of 2-TBSG. Our focus here is on the discounted 2-TBSG, defined as follows.
\begin{definition}
	A discounted 2-TBSG (2-TBSG for short) consists of a tuple $(\mathcal{S}, \mathcal{A}, P, r, \gamma)$, where $\mathcal{S} = \mathcal{S}_{1}\cup\mathcal{S}_{2}, \mathcal{A} = \mathcal{A}_{1}\cup\mathcal{A}_{2}$. $\mathcal{S}_{1}, \mathcal{S}_{2}, \mathcal{A}_{1}, \mathcal{A}_{2}$ are the state set and the action set of each player, respectively. $P\in\mathbb{R}^{|\mathcal{A}|\times|\mathcal{S}|}$ is the transition probability matrix, where $P(a, s)$ denotes the probability of the event that the next state is $s$ conditioned on the current action $a$. $r\in\mathbb{R}^{|\mathcal{A}|}$ is the reward vector, where $r_{a}$ denotes the immediate reward function received using action $a$. To be convenient, we use $m = |\mathcal{A}|$ to denote the number of actions, and $l = |\mathcal{S}|$ to denote the number of states.
\end{definition}

Given a state $s\in\mathcal{S}$ in 2-TBSG setting, we use $\mathcal{A}_{s}$ to denote the set of available actions corresponding to state $s$. A deterministic strategy (strategy for short) $\pi = (\pi_{1}, \pi_{2})$ is defined such that $\pi_{1}, \pi_{2}$ are mappings from $\mathcal{S}_{1}$ to $\mathcal{A}_{1}$ and from $\mathcal{S}_{2}$ to $\mathcal{A}_{2}$, respectively. Moreover, each state $s\in\mathcal{S}$ matches to an action in $\mathcal{A}_{s}$.

For a given strategy $\pi = (\pi_{1}, \pi_{2})$, we define the transition probability matrix $P_{\pi}\in\mathbb{R}^{l\times l}$ and reward function $r_{\pi}\in\mathbb{R}^{l}$ with respect to $\pi$. The $i$-th row of $P_{\pi}$ is chosen to be the row of action $\pi(i)$ in $P$, and the $i$-th element of $r_{\pi}$ is chosen to be the reward of action $\pi(i)$. It is easy to observe that the matrix $P_{\pi}$ is a stochastic matrix. We next define the value vector and the modified reward function.
\begin{definition}
	The value vector $v^{\pi}\in\mathbb{R}^{l}$ of a given strategy $\pi = (\pi_{1}, \pi_{2})$ is
	\begin{equation*}
		v^{\pi_{1}, \pi_{2}} = v^{\pi} = (I - \gamma P_{\pi})^{-1}r_{\pi}.
	\end{equation*}
\end{definition}
\begin{definition}
	The modified reward function $r^{\pi}\in\mathbb{R}^{m}$ of a given strategy $\pi$ is defined as
	\begin{equation*}
		r^{\pi} = r - (J - \gamma P)v_{\pi},
	\end{equation*}
	where $J\in\mathbb{R}^{m\times l}$ is defined as
	\begin{equation*}
		J_{ji} = \begin{cases} 1&\quad \text{if }j\in\mathcal{A}_{i},\\0&\quad \text{otherwise}.\end{cases}
	\end{equation*}
\end{definition}
\par Furthermore, for a given 2-TBSG, the optimal counterstrategy against
another player's given strategy is defined in Definition \ref{counter def}. The
equilibrium strategy is given in Definition \ref{equilibrium def}.
\begin{definition}\label{counter def}
	For player 2's strategy $\pi_{2}$, player 1's strategy $\pi_{1}$ is the optimal counterstrategy against $\pi_{2}$ if and only if for any strategy $\pi_{1}'$ of player 1, we have
	\begin{equation*}
		v^{\pi_{1}, \pi_{2}}\ge v^{\pi_{1}', \pi_{2}}.
	\end{equation*}
	Player 2's optimal counterstrategy can be defined similarly: $\pi_{2}$ is the optimal counterstrategy against $\pi_{1}$ if and only if for any strategy $\pi_{2}'$, $v^{\pi_{1}, \pi_{2}}\le v^{\pi_{1}, \pi_{2}'}$.
	Here for two value vector $v, v'$, we say $v\ge v'$ ($v\le v'$) if and only if $v(s)\ge v'(s)$ ($v(s)\le v'(s)$) for $\forall s\in\mathcal{S}$.
\end{definition}
\begin{definition}\label{equilibrium def}
	A strategy $\pi = (\pi_{1}, \pi_{2})$ is called an equilibrium strategy, if and only if $\pi_{1}$ is the optimal counterstrategy against $\pi_{2}$, and $\pi_{2}$ is the optimal counterstrategy against $\pi_{1}$.
\end{definition}
\par To describe the property of equilibrium strategies, we present Theorems \ref{existence} and \ref{optimal} given in \cite{Hansen:2013:SIS:2432622.2432623, shapley1953stochastic}. Theorem \ref{existence} indicates the existence of an equilibrium strategy.
\begin{theorem}\label{existence}
	Every 2-TBSG has at least an equilibrium strategy. If $\pi$ and $\pi'$ are two equilibrium strategies, then $v^{\pi} = v^{\pi'}$. Furthermore, for any player 1's strategy $\pi_{1}$ (or player 2's strategy $\pi_{2}$), there always exists a player 2's optimal counterstrategy $\pi_{2}$ against $\pi_{1}$ (player 1's optimal counterstrategy $\pi_{1}$ against $\pi_{2}$), and for any two optimal counterstrategy $\pi_{2}, \pi_{2}'$ ($\pi_{1}, \pi_{1}'$), we have $v^{\pi_{1}, \pi_{2}} = v^{\pi_{1}', \pi_{2}}$ ($v^{\pi_{1}, \pi_{2}} = v^{\pi_{1}, \pi_{2}'}$).
\end{theorem}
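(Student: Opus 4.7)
The plan is to prove the four claims in a natural order: first the existence and uniqueness (in value) of optimal counterstrategies, then the existence of an equilibrium via a Shapley-type contraction argument, and finally the uniqueness in value of the equilibrium.

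First I would handle the optimal counterstrategy claims. If player~1's strategy $\pi_{1}$ is fixed, then player~2 faces a one-player problem on $\mathcal{S}_{2}$: at each state $s\in\mathcal{S}_{2}$ he may pick any $a\in\mathcal{A}_{s}$ while on $\mathcal{S}_{1}$ his transitions and rewards are frozen by $\pi_{1}$. This is a discounted MDP with the \emph{min} objective, so the classical Bellman/contraction theory applies: the operator $(T_{\pi_{1}}v)(s)=r_{\pi_{1}(s)}+\gamma P(\pi_{1}(s),\cdot)v$ for $s\in\mathcal{S}_{1}$ and $(T_{\pi_{1}}v)(s)=\min_{a\in\mathcal{A}_{s}}\{r_{a}+\gamma P(a,\cdot)v\}$ for $s\in\mathcal{S}_{2}$ is a $\gamma$-contraction in $\|\cdot\|_{\infty}$, hence admits a unique fixed point $v^{*}$. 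Choosing $\pi_{2}(s)$ to attain the min at $v^{*}$ for each $s\in\mathcal{S}_{2}$ produces a deterministic optimal counterstrategy, and any such minimizer gives the same value $v^{*}$, which establishes the last sentence of the theorem. The symmetric argument handles player~2's strategy being fixed.

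Next I would prove the existence of an equilibrium by the Shapley operator. Define $T\colon\mathbb{R}^{l}\to\mathbb{R}^{l}$ by
\begin{equation*}
(Tv)(s)=\begin{cases}\displaystyle\max_{a\in\mathcal{A}_{s}}\bigl\{r_{a}+\gamma P(a,\cdot)v\bigr\},&s\in\mathcal{S}_{1},\\[2pt]\displaystyle\min_{a\in\mathcal{A}_{s}}\bigl\{r_{a}+\gamma P(a,\cdot)v\bigr\},&s\in\mathcal{S}_{2}.\end{cases}
\end{equation*}
Because each $P(a,\cdot)$ is a probability distribution, a standard inequality $|\max_{a}f(a)-\max_{a}g(a)|\le\max_{a}|f(a)-g(a)|$ (and its min counterpart) gives $\|Tv-Tv'\|_{\infty}\le\gamma\|v-v'\|_{\infty}$, so $T$ is a $\gamma$-contraction. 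By Banach's fixed point theorem it has a unique fixed point $v^{*}\in\mathbb{R}^{l}$. Let $\pi^{*}=(\pi_{1}^{*},\pi_{2}^{*})$ be any deterministic strategy that selects, at every state, an action attaining the max or the min in $Tv^{*}$.

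It remains to verify that this $\pi^{*}$ is an equilibrium and that any equilibrium has value $v^{*}$. For the first, by construction $v^{*}$ is the fixed point of the operator $T_{\pi_{2}^{*}}$ from the first paragraph (with player~1 free), so by uniqueness of that fixed point $\pi_{1}^{*}$ is an optimal counterstrategy against $\pi_{2}^{*}$ and delivers value $v^{*}$; symmetrically for player~2. For the second, if $\pi=(\pi_{1},\pi_{2})$ is any equilibrium, the counterstrategy characterizations force $v^{\pi}$ to satisfy $v^{\pi}(s)=\max_{a}\{r_{a}+\gamma P(a,\cdot)v^{\pi}\}$ on $\mathcal{S}_{1}$ and $v^{\pi}(s)=\min_{a}\{r_{a}+\gamma P(a,\cdot)v^{\pi}\}$ on $\mathcal{S}_{2}$, i.e.\ $Tv^{\pi}=v^{\pi}$, hence $v^{\pi}=v^{*}$ by uniqueness of the fixed point.

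The main obstacle I expect is verifying that the fixed-point strategy $\pi^{*}$ actually achieves $v^{*}$ as its value $v^{\pi^{*}}=(I-\gamma P_{\pi^{*}})^{-1}r_{\pi^{*}}$, not merely that $v^{*}$ satisfies the Bellman-type system. This requires invoking the single-player fixed-point uniqueness from the first paragraph rather than a direct algebraic manipulation, and it is the place where one must be careful not to circularly reuse the equilibrium one is trying to construct. Once that is in hand, the remaining statements follow cleanly from contraction and uniqueness.
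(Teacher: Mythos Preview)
The paper does not give its own proof of this theorem; it simply states the result and attributes it to \cite{Hansen:2013:SIS:2432622.2432623, shapley1953stochastic}. Your proposal supplies exactly the classical Shapley contraction argument, and it is correct. One small remark: the step you flag as the ``main obstacle'' is actually immediate and does not need the single-player fixed-point machinery. Since $v^{*}=Tv^{*}$ and $\pi^{*}$ attains the extremum at each state, you have $v^{*}=r_{\pi^{*}}+\gamma P_{\pi^{*}}v^{*}$, i.e.\ $(I-\gamma P_{\pi^{*}})v^{*}=r_{\pi^{*}}$, so $v^{\pi^{*}}=(I-\gamma P_{\pi^{*}})^{-1}r_{\pi^{*}}=v^{*}$ by the paper's definition of the value vector.
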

The next theorem points out a useful depiction of the value function at the equilibrium.
\begin{theorem}\label{optimal}
	Let $\pi^{*}$ be an equilibrium strategy for 2-TBSG. If $\pi_{1}$ is a strategy of player 1, and $\pi_{2}$ is player 2's optimal counterstrategy against $\pi_{1}$, then we have $v^{\pi^{*}}\ge v^{\pi_{1}, \pi_{2}}$. The equality holds if and only if $(\pi_{1}, \pi_{2})$ is an equilibrium strategy.
\end{theorem}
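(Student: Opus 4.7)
The plan is to establish the inequality via a Bellman-type comparison, then construct an auxiliary ``mixed'' strategy whose transition matrix lets me invert $(I-\gamma P)$ while preserving non-negativity. Let $\pi^{*}=(\pi_{1}^{*},\pi_{2}^{*})$ be an equilibrium and write $v^{*}=v^{\pi^{*}}$ and $v=v^{\pi_{1},\pi_{2}}$. From the definitions of equilibrium and optimal counterstrategy I expect to extract the following local inequalities: for every $s\in\mathcal{S}_{1}$, since $\pi_{1}^{*}$ is an optimal counter to $\pi_{2}^{*}$, the action $\pi_{1}^{*}(s)$ attains the maximum of $r_{a}+\gamma P_{a}v^{*}$ over $a\in\mathcal{A}_{s}$, so the suboptimal action $\pi_{1}(s)$ satisfies $v^{*}(s)\ge r_{\pi_{1}(s)}+\gamma P_{\pi_{1}(s)}v^{*}$; for every $s\in\mathcal{S}_{2}$, since $\pi_{2}$ is an optimal counter to $\pi_{1}$, the value $v(s)$ equals $\min_{a}[r_{a}+\gamma P_{a}v]$, so in particular $v(s)\le r_{\pi_{2}^{*}(s)}+\gamma P_{\pi_{2}^{*}(s)}v$. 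Subtracting the Bellman equations that $v$ and $v^{*}$ themselves satisfy on the appropriate states yields componentwise inequalities for $v^{*}-v$.

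Now I define the hybrid strategy $\tilde{\pi}$ given by $\tilde{\pi}(s)=\pi_{1}(s)$ for $s\in\mathcal{S}_{1}$ and $\tilde{\pi}(s)=\pi_{2}^{*}(s)$ for $s\in\mathcal{S}_{2}$. Assembling the two families of inequalities above gives
\begin{equation*}
(I-\gamma P_{\tilde{\pi}})(v^{*}-v)\ge 0.
\end{equation*}
Because $P_{\tilde{\pi}}$ is row-stochastic and $\gamma\in(0,1)$, the Neumann series $(I-\gamma P_{\tilde{\pi}})^{-1}=\sum_{k\ge 0}\gamma^{k}P_{\tilde{\pi}}^{k}$ is entrywise non-negative, so multiplying both sides on the left preserves the inequality and yields $v^{*}\ge v$. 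This handles the first assertion.

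For the ``if and only if'' part, the ``if'' direction is immediate from Theorem \ref{existence}, which says all equilibria share the same value. For the ``only if'' direction, assume $v=v^{*}$. Since $\pi_{2}$ is already an optimal counterstrategy against $\pi_{1}$ by hypothesis, it only remains to show that $\pi_{1}$ is an optimal counterstrategy against $\pi_{2}$. Fix an arbitrary strategy $\pi_{1}'$ of player 1. Because $v=v^{*}$ and $v^{*}(s)=\max_{a}[r_{a}+\gamma P_{a}v^{*}]$ for $s\in\mathcal{S}_{1}$, I obtain $v(s)\ge r_{\pi_{1}'(s)}+\gamma P_{\pi_{1}'(s)}v$ on $\mathcal{S}_{1}$, while the Bellman equation for $(\pi_{1},\pi_{2})$ gives equality at $s\in\mathcal{S}_{2}$. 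Combining these yields $(I-\gamma P_{(\pi_{1}',\pi_{2})})v\ge r_{(\pi_{1}',\pi_{2})}$, and inverting once more (same Neumann-series argument) shows $v\ge v^{\pi_{1}',\pi_{2}}$, completing the optimality of $\pi_{1}$.

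The only delicate point I anticipate is bookkeeping the two different coordinate cases ($\mathcal{S}_{1}$ versus $\mathcal{S}_{2}$) consistently when passing from per-state inequalities to a single vector inequality involving $P_{\tilde{\pi}}$; the choice of the hybrid strategy (player 1's current action on $\mathcal{S}_{1}$, player 2's equilibrium action on $\mathcal{S}_{2}$) is crucial because it aligns the direction of each inequality so that all of them point the same way, and the opposite hybrid would not work. Everything else reduces to the standard monotonicity of $(I-\gamma P)^{-1}$ for stochastic $P$.
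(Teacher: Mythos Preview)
Your proof is correct. The paper, however, does not supply its own proof of Theorem~\ref{optimal}; it merely cites the result from \cite{Hansen:2013:SIS:2432622.2432623, shapley1953stochastic} as background. So there is no in-paper argument to compare against.

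That said, your approach is entirely consistent with the machinery the paper develops elsewhere. The per-state inequalities you extract are exactly the content of Lemma~\ref{condition}: $(r^{\pi^{*}})_{\mathcal{A}_{1}}\le 0$ gives $v^{*}(s)\ge r_{\pi_{1}(s)}+\gamma P_{\pi_{1}(s)}v^{*}$ on $\mathcal{S}_{1}$, and $(r^{\pi_{1},\pi_{2}})_{\mathcal{A}_{2}}\ge 0$ gives $v(s)\le r_{\pi_{2}^{*}(s)}+\gamma P_{\pi_{2}^{*}(s)}v$ on $\mathcal{S}_{2}$. Your hybrid strategy $\tilde{\pi}=(\pi_{1},\pi_{2}^{*})$ is precisely the right choice to align both families of inequalities, and the Neumann-series monotonicity of $(I-\gamma P_{\tilde{\pi}})^{-1}$ is the same tool the paper invokes in the proof of Lemma~\ref{condition}. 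The equality case is handled cleanly: the ``if'' direction is Theorem~\ref{existence}, and for ``only if'' you correctly reduce to showing $\pi_{1}$ is optimal against $\pi_{2}$ via the same inversion trick. Nothing is missing.
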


 We now define the flux vector of a given strategy $\pi$.
\begin{definition}
	The flux $x^{\pi}\in\mathbb{R}^{m}$ of a given strategy $\pi$ is defined as
	\begin{equation*}
		\begin{aligned}
			& (x^{\pi})_{\pi} = (I - \gamma P_{\pi})^{-T}\mathbf{1},\\
			& (x^{\pi})_{a} = 0, \quad \forall a\in\mathcal{A}, a\not\in \pi.
		\end{aligned}
	\end{equation*}
\end{definition}
Our next lemma presents bounds and conditions of the flux vector, and the
relationship among the value function, the flux vector and reduced costs. This
lemma and the following several lemmas can  be found in 
\cite{Hansen:2013:SIS:2432622.2432623}. To make the paper self-contained, we
briefly give their proofs.
\begin{lemma}\label{several}
	For any strategy $\pi$, we have
	\begin{enumerate}
		\item $\mathbf{1}^{T}x^{\pi} = \frac{l}{1-\gamma}$;
		\item for any $a\in \pi$, $1\le x_{a}\le \frac{l}{1-\gamma}$;
		\item $\mathbf{1}^{T}v^{\pi} = (x^{\pi})^{T}r$;
		\item $v^{\pi'} - v^{\pi} = (I - \gamma P_{\pi'})^{-1}(r^{\pi})_{\pi'}$, and moreover, $\mathbf{1}^{T}(v^{\pi'} - v^{\pi}) = (x^{\pi'})^{T}r^{\pi}$.
	\end{enumerate}
\end{lemma}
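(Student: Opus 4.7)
The plan is to prove the four parts in sequence, using only basic properties of stochastic matrices, the Neumann series expansion, and direct manipulation of the definitions. Nothing here requires a clever auxiliary construction; the work is in keeping track of which indices are restricted to $\pi$ and which are not.

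For part (1), I would exploit the fact that $P_\pi\mathbf{1}=\mathbf{1}$ since $P_\pi$ is stochastic; this yields $(I-\gamma P_\pi)^{-1}\mathbf{1}=\frac{1}{1-\gamma}\mathbf{1}$, and taking an inner product with $\mathbf{1}$ on both sides of the defining identity $(x^\pi)_\pi = (I-\gamma P_\pi)^{-T}\mathbf{1}$ gives the result. Part (2) would then follow by expanding $(I-\gamma P_\pi)^{-T}$ as the Neumann series $\sum_{k\ge 0}(\gamma P_\pi^T)^k$, whose entries are all nonnegative and whose $k=0$ term contributes the identity; this gives $x_a\ge 1$ for $a\in\pi$, and the upper bound $x_a\le \frac{l}{1-\gamma}$ is then immediate from part (1) by nonnegativity of the remaining entries.

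For part (3), I would simply move the transpose: $\mathbf{1}^T v^\pi = \mathbf{1}^T(I-\gamma P_\pi)^{-1}r_\pi = \bigl[(I-\gamma P_\pi)^{-T}\mathbf{1}\bigr]^T r_\pi$, which equals $(x^\pi)^T r$ because $x^\pi$ vanishes off $\pi$ while $r_\pi$ agrees with $r$ on $\pi$.

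The main obstacle is part (4), where the key observation is that the submatrix $J_{\pi'}$ formed by extracting the rows of $J$ indexed by $\pi'$ equals the identity. This requires reading the definition of $J$ carefully: each action belongs to exactly one state, so each row of $J$ has exactly one entry equal to $1$, and the row of $\pi'(s)$ has its $1$ in column $s$. Once this is in hand, the identity $(r^\pi)_{\pi'} = r_{\pi'} - (I-\gamma P_{\pi'})v^\pi$ follows directly from the definition of $r^\pi$, and rearranging and multiplying through by $(I-\gamma P_{\pi'})^{-1}$ yields the first formula. The second formula then follows by pairing with $\mathbf{1}^T$ on the left and reapplying the transpose trick from part (3), together with the vanishing of $x^{\pi'}$ off $\pi'$. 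Overall every step is short and essentially mechanical; the only piece of genuine content is the identification $J_{\pi'} = I$.
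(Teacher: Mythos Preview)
Your proposal is correct and follows essentially the same route as the paper's own proof: the same transpose manipulation for (1) and (3), nonnegativity of the Neumann series for (2) (the paper writes this as the algebraic identity $(x^{\pi})_{\pi}-\mathbf{1}=\gamma P_{\pi}^{T}(I-\gamma P_{\pi})^{-T}\mathbf{1}\ge 0$, which is the same content), and for (4) the paper implicitly uses exactly your observation $J_{\pi'}=I$ when it rewrites $(I-\gamma P_{\pi'})v^{\pi}$ as $[(J-\gamma P)v^{\pi}]_{\pi'}$. Your write-up makes that step more explicit, but the argument is identical.
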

\begin{proof}
	Item (1) is proved by
	\begin{equation*}
		\mathbf{1}^{T}x^{\pi} = \mathbf{1}^{T}(I - \gamma P_{\pi})^{-T}\mathbf{1} = [(I - \gamma P_{\pi})^{-1}\mathbf{1}]^{T}\mathbf{1} = \frac{1}{1-\gamma}\mathbf{1}^{T}\mathbf{1} = \frac{l}{1-\gamma}. 
	\end{equation*}
	Item (2) is due to
	\begin{equation*}
		(x^{\pi})_{\pi} - \mathbf{1} = \gamma P_{\pi}(I - \gamma P_{\pi})^{-T}\mathbf{1}\ge 0.
	\end{equation*}
	This indicates that $(x^{\pi})_{a}\ge 1$, $\forall a\in\pi$. Hence we have $x^{\pi}\ge 0$ and $(x^{\pi})_{a}\le \frac{l}{1-\gamma}$ from item (1). Finally the last two items are obtained from
	\begin{equation*}
		\mathbf{1}^{T}v^{\pi} = \mathbf{1}^{T}(I - \gamma P_{\pi})^{-1}r_{\pi} = \mathbf{r}_{\pi}^{T}(I - \gamma P_{\pi})^{-T}\mathbf{1} = \mathbf{r}_{\pi}^{T}(x^{\pi})_{\pi} = (x^{\pi})^{T}r,
	\end{equation*}
	and
	\begin{equation*}
		\begin{aligned}
			v^{\pi'} - v^{\pi} & = (I - \gamma P_{\pi'})^{-1}r_{\pi'} - (I -
            \gamma P_{\pi'})^{-1}(I - \gamma P_{\pi'})v^{\pi};\\
			& = (I - \gamma P_{\pi'})^{-1}\left[r - (J - \gamma P)v_{\pi}\right]_{\pi'} = (I - \gamma P_{\pi'})^{-1}(r^{\pi})_{\pi'};\\
			\mathbf{1}^{T}(v^{\pi'} - v^{\pi}) & = \mathbf{1}^{T}(I - \gamma P_{\pi'})^{-1}(r^{\pi})_{\pi'} = (x^{\pi'})_{\pi'}^{T}(r^{\pi})_{\pi'} = (x^{\pi'})^{T}r^{\pi}.\\
		\end{aligned}
	\end{equation*}
\end{proof}
\par In the following, we present a lemma indicating the positiveness or
negativeness of the reduced costs of optimal counterstrategies and equilibrium strategies.
\begin{lemma}\label{condition}
	\begin{enumerate}
		\item A strategy $\pi_{1}$ for player 1 is an optimal counterstrategy against player 2's strategy $\pi_{2}$ if only if $(r^{\pi_{1}, \pi_{2}})_{\mathcal{A}_{1}}\le 0$. 
		\item A strategy $\pi_{2}$ for player 2 is an optimal counterstrategy against player 1's strategy $\pi_{1}$ if only if $(r^{\pi_{1}, \pi_{2}})_{\mathcal{A}_{2}}\ge 0$. 
		\item A strategy $\pi = (\pi_{1}, \pi_{2})$ is an equilibrium strategy if and only if it satisfies:
		\begin{equation*}
			(r^{\pi_{1}, \pi_{2}})_{\mathcal{A}_{1}}\le 0, \quad (r^{\pi_{1}, \pi_{2}})_{\mathcal{A}_{2}}\ge 0.
		\end{equation*}
	\end{enumerate}
\end{lemma}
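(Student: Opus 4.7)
The proof strategy hinges on item (4) of Lemma \ref{several}, which gives the exact formula
\[
v^{\pi'} - v^{\pi} = (I - \gamma P_{\pi'})^{-1}(r^{\pi})_{\pi'},
\]
together with the observation that $(I-\gamma P_{\pi'})^{-1} = \sum_{k\ge 0}\gamma^{k}P_{\pi'}^{k}$ is entrywise nonnegative, with diagonal entries at least $1$. So the plan is to translate the componentwise inequality between value vectors into a componentwise inequality on the reduced costs and exploit this explicit linear relation in both directions.

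For the ``if'' direction of item~(1), I would fix $\pi_{2}$, assume $(r^{\pi_{1},\pi_{2}})_{\mathcal{A}_{1}}\le 0$, and take an arbitrary competitor $\pi_{1}'$. Setting $\pi=(\pi_{1},\pi_{2})$ and $\pi'=(\pi_{1}',\pi_{2})$, I note that at states $s\in\mathcal{S}_{2}$ the action chosen by $\pi'$ agrees with that of $\pi$, so the corresponding entries of $(r^{\pi})_{\pi'}$ vanish by definition of $r^{\pi}$, while at states $s\in\mathcal{S}_{1}$ the entries of $(r^{\pi})_{\pi'}$ are of the form $(r^{\pi})_{a}$ with $a\in\mathcal{A}_{1}$, hence $\le 0$ by assumption. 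Applying the nonnegative operator $(I-\gamma P_{\pi'})^{-1}$ gives $v^{\pi'}-v^{\pi}\le 0$, so $\pi_{1}$ is indeed optimal against $\pi_{2}$.

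For the ``only if'' direction I would use a single-state deviation argument. Suppose $\pi_{1}$ is optimal against $\pi_{2}$, and for contradiction suppose some $a\in\mathcal{A}_{1}$ with $a\in\mathcal{A}_{s}$, $s\in\mathcal{S}_{1}$, satisfies $(r^{\pi})_{a}>0$. Define $\pi_{1}'$ to agree with $\pi_{1}$ everywhere except at $s$, where it selects $a$. Then $(r^{\pi})_{\pi'}$ has a single nonzero coordinate, namely $(r^{\pi})_{a}>0$ at row $s$, and the formula becomes $v^{\pi'}-v^{\pi} = (r^{\pi})_{a}\cdot(I-\gamma P_{\pi'})^{-1}e_{s}$. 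Since the $s$-th diagonal entry of $(I-\gamma P_{\pi'})^{-1}$ is at least $1$, the $s$-th component of $v^{\pi'}-v^{\pi}$ is strictly positive, contradicting optimality. Item~(2) follows by the identical argument with inequalities reversed. Item~(3) is then immediate: by Theorem~\ref{existence} and Definition~\ref{equilibrium def}, $\pi$ is an equilibrium iff $\pi_{1}$ is optimal against $\pi_{2}$ and $\pi_{2}$ is optimal against $\pi_{1}$, so combining items~(1) and~(2) yields the claim.

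There is no real obstacle here; the only subtlety is keeping track of which coordinates of $(r^{\pi})_{\pi'}$ automatically vanish because $\pi$ and $\pi'$ agree on the states belonging to the non-deviating player, which is what lets the sign condition be imposed on a single player's action set at a time.
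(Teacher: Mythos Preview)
Your proposal is correct and follows essentially the same approach as the paper: the ``if'' direction is identical (apply Lemma~\ref{several}(4), note that the entries of $(r^{\pi})_{\pi'}$ on $\mathcal{S}_{2}$ vanish and those on $\mathcal{S}_{1}$ are nonpositive, and use nonnegativity of $(I-\gamma P_{\pi'})^{-1}=\sum_{k\ge 0}\gamma^{k}P_{\pi'}^{k}$), and the ``only if'' direction uses the same single-state deviation idea. The only cosmetic difference is that for the latter the paper reads off the sign of $r^{\pi}_{a'}$ from the scalar identity $\mathbf{1}^{T}(v^{\pi'}-v^{\pi})=x^{\pi'}_{a'}r^{\pi}_{a'}$ together with $x^{\pi'}_{a'}\ge 1$, whereas you read it off from the $s$-th coordinate of $v^{\pi'}-v^{\pi}$ using that the $(s,s)$ entry of $(I-\gamma P_{\pi'})^{-1}$ is at least $1$; these are equivalent uses of the same positivity.
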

\begin{proof}
	If $\pi_{1}, \pi_{2}$ satisfies $(r^{\pi_{1}, \pi_{2}})_{\mathcal{A}_{1}}\le 0$, then for any player 1's strategy $\pi_{1}'$, we have
	\begin{equation*}
		v^{\pi_{1}', \pi_{2}} - v^{\pi_{1}, \pi_{2}} = (I - \gamma P_{\pi_{1}', \pi_{2}})^{-1}(r^{\pi_{1}, \pi_{2}})_{\pi_{1}', \pi_{2}} = \sum_{n=0}^{\infty}\gamma^{n}P_{\pi_{1}', \pi_{2}}^{n}(r^{\pi_{1}, \pi_{2}})_{\pi_{1}', \pi_{2}}\le 0,
	\end{equation*}
	where the last inequality follows from $(r^{\pi_{1}, \pi_{2}})_{\pi_{1}'}\le 0$ for $\pi_{1}'\in\mathcal{A}_{1}$ and $(r^{\pi_{1}, \pi_{2}})_{\pi_{2}} = 0$.
	\par Suppose that player 1's strategy $\pi_{1}$ is the optimal counterstrategy against player 2's strategy $\pi_{2}$. For any $a'\in\mathcal{A}_{s}$, $s\in\mathcal{S}_{1}$ and $a'\not\in \pi_{1}$, we let 
	\begin{equation*}
		\pi_{1}'(s_{1}) = 
		\begin{cases}
			a' &\quad \text{if } s_{1} = s;\\
			\pi_{1}(s_{1}) &\quad \text{else}. 
		\end{cases}
	\end{equation*}
	Then again from  Lemma \ref{several} (4), we have
	\begin{equation*}
		x^{\pi'_{1}, \pi_{2}}_{a'}r^{\pi}_{a'} = \mathbf{1}^{T}(v^{\pi_{1}', \pi_{2}} - v^{\pi_{1}, \pi_{2}})\le 0,
	\end{equation*}
	where the inequality comes from the definition of equilibrium strategies. Since $a'\in\pi_{1}'$, we have $x^{\pi'_{1}, \pi_{2}}_{a'}\ge 1$, which indicates that $r^{\pi}_{a'}\le 0$. With this estimation and $r^{\pi}_{a} = 0$ for $\forall a\in\pi$, we have proved that $(r^{\pi})_{\mathcal{A}_{1}}\le 0$. Hence, item (1) is established, and the proof of item (2) is similar. Finally item (3) follows from items (1) and (2) directly.
\end{proof}

\section{Geometrically Converging Algorithms} \label{sec:alg}
\par Inspired by the simplex method solving the LP corresponding to the MDP and the strategy iteration algorithm given in \cite{Hansen:2013:SIS:2432622.2432623}, we propose a simplex strategy iteration (Algorithm \ref{simplex}) and a modified simplex strategy iteration algorithm (Algorithm \ref{modified}) for 2-TBSG.

The simplex strategy iteration algorithm can be viewed as a generalization of
the strongly polynomial simplex algorithm in solving MDPs
\cite{Post:2013:SMS:2627817.2627922}. In our algorithm, both players update
their strategies in turn. In each iteration, while the first player updates its
strategy using the simplex method, which means only updating the action with the
largest reduced cost, the second player updates its strategy according to the optimal
counterstrategy. 
 When the second player has only one possible action and the transition matrix
 is deterministic, the 2-TBSG reduces to a deterministic MDP. Then the simplex strategy iteration
 algorithm can find an equilibrium (optimal) strategy in strongly polynomial
 time independent of $\gamma$, which is a property has not been proven for the strategy iteration \cite{Hansen:2013:SIS:2432622.2432623}.

As for the modified simplex strategy iteration algorithm, it can be viewed as a modification of the simplex strategy iteration algorithm. In this algorithm, both players also update their strategies in turn, and the second player always finds the optimal counterstrategies in its moves. However, in each of the first player's move, only the action is updated which leads to the biggest improvement on the value function when the second player uses the optimal counterstrategy.

It is easy to know that every iteration of the simplex strategy iteration algorithm involves a step of a simplex update and a solution to an MDP. And every iteration of the modified simplex strategy iteration algorithm involves solutions to multiple MDPs. Hence every iteration in both of these two algorithms can be solved in strongly polynomial time when the discounted factor is fixed.

\begin{algorithm}
	\caption{A Simplex Strategy Iteration Method}
	\label{simplex}
	\begin{algorithmic}[1]
		\State \textbf{Initialize: } $\pi_{1}^{0}$ for player 1, $\pi_{2}^{0}$ for player 2, $n\leftarrow 0$.
		\Repeat
			\State Find $s\in\mathcal{S}_{1}$ and $a\in\mathcal{A}_{s}$ such that $r_{a}^{\pi_{1}^{n}, \pi_{2}^{n}}$ is the largest among such $(s, a)$.
			\State $\pi_{1}^{n+1}(s)\leftarrow a$, $\pi_{1}^{n+1}(s')\leftarrow \pi_{1}^{n}(s')$ for $\forall s'\in\mathcal{S}_{1}, s'\neq s$.
			\State Find the optimal counterstrategy $\pi_{2}^{n+1}$ against $\pi_{1}^{n+1}$.
		\Until{$\pi_{1}^{n} = \pi_{1}^{n+1}$.}
		\State \textbf{Output: } $\pi_{1}^{n}, \pi_{2}^{n}$.
	\end{algorithmic}
\end{algorithm}

\begin{algorithm}
	\caption{A Modified Simplex Strategy Iteration Algorithm}
	\label{modified}
	\begin{algorithmic}[1]
		\State \textbf{Initialize: } $\pi_{1}$ for player 1, $\pi_{2}$ for player 2
		\State Let $\pi_{1}^{0}\leftarrow \pi_{1}, n\leftarrow 0$
		\Repeat
			\State $\pi_{1}^{n+1}\leftarrow \pi_{1}^{n}, \quad \pi_{2}^{n+1}\leftarrow \pi^{n}_{2}$
			\For{any $s\in\mathcal{S}$ and action $a\in\mathcal{A}_{s}$}
				\State Let $\tilde{\pi}_{1}^{n+1}$ be the player 1's strategy where only state $s$'s action changes to $a$, and other states' actions keep to be the same as $\pi_{1}^{n}$.
				\State Let $\tilde{\pi}_{2}^{n+1}$ be the optimal counterstrategy against $\tilde{\pi}_{1}^{n+1}$.
				\If{$\mathbf{1}^{T}v^{\pi_{1}^{n+1}, \pi_{2}^{n+1}}\le \mathbf{1}^{T}v^{\tilde{\pi}_{1}^{n+1}, \tilde{\pi}_{2}^{n+1}}$}
					\State $\pi_{1}^{n+1}\leftarrow \tilde{\pi}_{1}^{n+1}, \quad \pi_{2}^{n+1}\leftarrow \tilde{\pi}^{n+1}_{2}$.
				\EndIf
			\EndFor
		\Until{$\mathbf{1}^{T}\pi_{1}^{n} = \mathbf{1}^{T}\pi_{1}^{n+1}$.}
		\State \textbf{Output: } $\pi_{1}^{n}, \pi_{2}^{n}$.
	\end{algorithmic}
\end{algorithm}

\par Next we present a class of geometrically converging algorithms used for proving the strongly polynomial complexity for several algorithms in the next section.
\begin{definition}
	We say a strategy-update algorithm (algorithms which update strategies for both players in each iteration) is a geometrically converging algorithm with parameter a $M$, if it updates a strategy $\pi^{n} = (\pi_{1}^{n}, \pi_{2}^{n})$ to $\pi^{n+1} = (\pi_{1}^{n+1}, \pi_{2}^{n+1})$ such that the following properties holds.
	\begin{itemize}
		\item $\pi_{2}^{n+1}$ is the optimal counterstrategy against $\pi^{n+1}_{1}$; 
		\item $(r^{\pi^{n}})_{\pi_{1}^{n+1}}\ge 0$;
		\item If $\mathbf{1}^{T}(v^{\pi^{n+1}} - v^{\pi^{n}}) = 0$, then $\pi^{n}$ is an equilibrium strategy;
		\item The updates of this algorithm satisfies
		\begin{equation*}
			\mathbf{1}^{T}\left(v^{\pi_1^*, \pi_2^*} - v^{\pi_1^{n+M}, \pi_2^{n+M}}\right)\le\frac{(1-\gamma)^{2}}{n^{2}}\cdot\mathbf{1}^{T}\left(v^{\pi_1^{*}, \pi_2^{*}} - v^{\pi_1^n, \pi_2^n}\right).
		\end{equation*}
	\end{itemize}
\end{definition} 

\par To begin with, we exhibit a lemma indicating the geometrically converging property of the value function in the simplex strategy iteration algorithm.
\begin{lemma}\label{simplex lemma}
	Suppose the sequence of strategy generated by the simplex strategy iteration algorithm is $\pi^{1} = (\pi^{1}_{1}, \pi^{1}_{2}), \pi^{2} = (\pi^{2}_{1}, \pi^{2}_{2}) \cdots, \pi^{n} = (\pi^{n}_{1}, \pi^{n}_{2}), \cdots$. Then the following inequality holds
	\begin{equation}\label{eq1}
		\mathbf{1}^{T}(v^{\pi^{*}} - v^{\pi^{n+1}})\le \left(1 - \frac{1-\gamma}{l}\right)\mathbf{1}^{T}(v^{\pi^{*}} - v^{\pi^{n}}).
	\end{equation}
\end{lemma}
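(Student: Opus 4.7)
The plan is to use Lemma~\ref{several}(4) on both sides of (\ref{eq1}) and to bridge between $\pi^{n}$ and $\pi^{*}$ through the combined strategy $\hat\pi = (\pi_{1}^{*}, \pi_{2}^{n})$, in which player~1 adopts the equilibrium action and player~2 keeps the current action. Let $a^{*}$ be the action selected by the simplex rule at iteration $n$, so that $a^{*}\in\pi_{1}^{n+1}\setminus\pi_{1}^{n}$ and $r^{\pi^{n}}_{a^{*}}=\max_{a\in\mathcal{A}_{1}} r^{\pi^{n}}_{a}$. Because $\pi_{2}^{n}$ is maintained by Algorithm~\ref{simplex} as the optimal counterstrategy against $\pi_{1}^{n}$, Lemma~\ref{condition}(2) yields $r^{\pi^{n}}_{a}\ge 0$ for every $a\in\mathcal{A}_{2}$, a nonnegativity fact I will use on both sides.

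First I would lower-bound the one-step gain. Lemma~\ref{several}(4) gives $\mathbf{1}^{T}(v^{\pi^{n+1}} - v^{\pi^{n}}) = (x^{\pi^{n+1}})^{T}r^{\pi^{n}}$. Actions in $\pi_{1}^{n+1}\setminus\{a^{*}\}\subset\pi_{1}^{n}$ contribute $0$ because $r^{\pi^{n}}$ vanishes on $\pi^{n}$; actions in $\pi_{2}^{n+1}\subset\mathcal{A}_{2}$ contribute nonnegatively by the paragraph above; and the $a^{*}$-term is at least $r^{\pi^{n}}_{a^{*}}$ since $x^{\pi^{n+1}}_{a^{*}}\ge 1$ by Lemma~\ref{several}(2). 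Next I would upper-bound the equilibrium gap: because $\pi_{2}^{*}$ is the minimizing optimal counterstrategy against $\pi_{1}^{*}$, we have $v^{\pi^{*}}\le v^{\hat\pi}$ pointwise, and hence $\mathbf{1}^{T}(v^{\pi^{*}} - v^{\pi^{n}})\le (x^{\hat\pi})^{T}r^{\pi^{n}}$ by Lemma~\ref{several}(4) again. Since $x^{\hat\pi}$ is supported on $\pi_{1}^{*}\cup\pi_{2}^{n}$ and $r^{\pi^{n}}$ vanishes on $\pi_{2}^{n}$, only the player~1 indices contribute, and on those $r^{\pi^{n}}_{a}\le r^{\pi^{n}}_{a^{*}}$ by the simplex maximality; together with $\mathbf{1}^{T}x^{\hat\pi} = \frac{l}{1-\gamma}$ from Lemma~\ref{several}(1), this yields $(x^{\hat\pi})^{T}r^{\pi^{n}}\le \frac{l}{1-\gamma}\,r^{\pi^{n}}_{a^{*}}$. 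Chaining the two estimates produces $\mathbf{1}^{T}(v^{\pi^{n+1}} - v^{\pi^{n}})\ge \frac{1-\gamma}{l}\,\mathbf{1}^{T}(v^{\pi^{*}} - v^{\pi^{n}})$, which rearranges to (\ref{eq1}).

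The delicate step is the choice of $\hat\pi$: taking $\pi_{1}^{*}$ for player~1 lets me use the optimality of $\pi_{2}^{*}$ to obtain $v^{\pi^{*}}\le v^{\hat\pi}$, while taking $\pi_{2}^{n}$ for player~2 annihilates the player~2 coordinates of $r^{\pi^{n}}$, against which the simplex rule provides no control. Without this exact combination the player~2 contribution in the equilibrium expansion would have no usable sign, and the MDP-style single-coordinate maximality argument of \cite{Post:2013:SMS:2627817.2627922} would fail to port over to the 2-TBSG setting; isolating the simplex inequality to $\mathcal{A}_{1}$ via $\hat\pi$ is the essential new ingredient.
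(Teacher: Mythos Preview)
Your argument is correct and follows the same two–step outline as the paper: lower bound the one–step gain by the pivot reduced cost $r^{\pi^{n}}_{a^{*}}$, and upper bound the equilibrium gap by $\frac{l}{1-\gamma}\,r^{\pi^{n}}_{a^{*}}$, then combine. The one substantive difference is how the upper bound is obtained. The paper expands $\mathbf{1}^{T}(v^{\pi^{*}}-v^{\pi^{n}})$ directly as a flux–reduced-cost inner product and then drops the $\mathcal{A}_{2}$ block ``by Lemma~\ref{condition}''; taken literally with $x^{\pi^{*}}$ this step points the wrong way, since Lemma~\ref{condition} gives $(r^{\pi^{n}})_{\mathcal{A}_{2}}\ge 0$ and $x^{\pi^{*}}\ge 0$, so the $\mathcal{A}_{2}$ contribution is nonnegative rather than nonpositive (and the displayed chain also carries what looks like a typo, writing $x^{\pi^{n+1}}$ where $x^{\pi^{*}}$ is intended).

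Your detour through $\hat\pi=(\pi_{1}^{*},\pi_{2}^{n})$ is exactly what cleans this up: keeping $\pi_{2}^{n}$ makes the $\mathcal{A}_{2}$ block of $r^{\pi^{n}}$ vanish identically (rather than having an inconvenient sign), and the harmless extra inequality $v^{\pi^{*}}\le v^{\hat\pi}$ from the minimizing role of $\pi_{2}^{*}$ restores the link to $v^{\pi^{*}}$. This is the same ``combined strategy'' idea the paper itself deploys in Section~\ref{sec:proof} via $\eta^{n}=(\pi_{1}^{n},\pi_{2}^{*})$, so your proof is in the paper's spirit but executes the present lemma more carefully than the paper's own display.
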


\begin{proof}
	According to Algorithm \ref{simplex}, we have
\begin{equation*}
	\begin{aligned}
		\mathbf{1}^{T}(v^{\pi^{n+1}} - v^{\pi^{n}}) & \ge r^{\pi^{n}}_{a_{1}}x^{\pi^{n+1}}_{a_{1}}\ge r^{\pi^{n}}_{a_{1}}\ge \frac{1-\gamma}{l}\sum_{a\in\mathcal{A}_{1}}r^{\pi^{n}}_{a}x^{\pi^{n+1}}_{a}\\
		& \ge \frac{1-\gamma}{l}\sum_{a\in\mathcal{A}}r^{\pi^{n}}_{a}x^{\pi^{n+1}}_{a} = \frac{1-\gamma}{l}\mathbf{1}^{T}(v^{\pi^{*}} - v^{\pi^{n}}),
	\end{aligned}
\end{equation*}
where the second and third inequalities follow from Lemma \ref{several} (2) and the choice of $a_{1} = \arg\max_{a\in\mathcal{A}_{1}}r^{\pi^{n}}_{a}$, the fourth inequality follows from Lemma \ref{condition}, and the first inequality and last equation are due to Lemma \ref{several} (4) and Lemma \ref{condition}.
\end{proof}

\par Using this lemma, we show in the next proposition that the strategy iteration algorithm, Algorithm \ref{simplex} and Algorithm \ref{modified} all belong to the class of geometrically converging algorithms.
\begin{proposition}\label{prop}
	\begin{enumerate}
		\item The strategy iteration algorithm given in \cite{Hansen:2013:SIS:2432622.2432623} is a geometrically converging algorithm with parameter $M = \mathcal{O}\left(\frac{1}{1-\gamma}\log\frac{l}{1-\gamma}\right)$;
		\item The simplex strategy iteration algorithm (Algorithm \ref{simplex}) is a geometrically converging algorithm with parameter $M = \mathcal{O}\left(\frac{l}{1-\gamma}\log\frac{l}{1-\gamma}\right)$;
		\item The modified simplex strategy iteration algorithm (Algorithm \ref{modified}) is a geometrically converging algorithm with parameter $M = \mathcal{O}\left(\frac{l}{1-\gamma}\log\frac{l}{1-\gamma}\right)$;
	\end{enumerate}
\end{proposition}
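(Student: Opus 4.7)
The plan is to verify the four conditions in the definition of a geometrically converging algorithm for each of the three algorithms separately. Conditions (1)--(3) are structural and will follow directly from how each algorithm is written, while condition (4) is obtained by establishing a one-step contraction on $\mathbf{1}^T(v^{\pi^{*}} - v^{\pi^n})$ and iterating it $M$ times.

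For condition (1), line 5 of Algorithm \ref{simplex}, the analogous assignment inside Algorithm \ref{modified}, and Howard's step in strategy iteration all enforce $\pi_2^{n+1}$ to be an optimal counterstrategy by construction. For condition (2) in Algorithm \ref{simplex}, $\pi_1^{n+1}$ coincides with $\pi_1^n$ except at the single state $s$ where the action with the largest reduced cost $r_a^{\pi^n}\ge 0$ is swapped in (otherwise the algorithm has terminated), while the retained actions have reduced cost zero; the same single-swap reasoning handles Algorithm \ref{modified}, and for strategy iteration condition (2) follows from the standard fact that Howard's update selects actions with nonnegative reduced cost at every state. For condition (3), I take the contrapositive of the per-step contraction: for Algorithm \ref{simplex}, Lemma \ref{simplex lemma} rearranges to $\mathbf{1}^T(v^{\pi^{n+1}}-v^{\pi^n}) \ge \frac{1-\gamma}{l}\mathbf{1}^T(v^{\pi^{*}}-v^{\pi^n})$, so if the left-hand side vanishes then $\mathbf{1}^T(v^{\pi^{*}}-v^{\pi^n}) \le 0$, which together with $v^{\pi^{*}} \ge v^{\pi^n}$ from Theorem \ref{optimal} (using $\pi_2^n$ optimal against $\pi_1^n$) forces $v^{\pi^{*}}=v^{\pi^n}$ and hence equilibrium by Theorem \ref{optimal}; the same contrapositive applies to Algorithm \ref{modified} and to strategy iteration once their per-step contractions are in hand.

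For condition (4), I will iterate a one-step bound of the form $\mathbf{1}^T(v^{\pi^{*}} - v^{\pi^{n+1}}) \le (1-c)\mathbf{1}^T(v^{\pi^{*}} - v^{\pi^n})$. Algorithm \ref{simplex} already admits $c = (1-\gamma)/l$ by Lemma \ref{simplex lemma}. Algorithm \ref{modified} inherits the same rate since a simplex-type single-action swap is one of the $(s,a)$ candidates examined inside the \textbf{for} loop, so the greedy selection improves $\mathbf{1}^T v$ by at least as much as Algorithm \ref{simplex}'s update would. For the strategy iteration algorithm of \cite{Hansen:2013:SIS:2432622.2432623}, a sharper $c = 1-\gamma$ is obtained by writing $v^{\pi^{n+1}} - v^{\pi^n} = (I-\gamma P_{\pi^{n+1}})^{-1}(r^{\pi^n})_{\pi^{n+1}}$ via Lemma \ref{several}(4) and using that strategy iteration greedily picks the componentwise reduced-cost maximizer at every state. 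Iterating $M$ times and using $(1-c)^M \le e^{-cM}$ yields $M = \mathcal{O}(\tfrac{1}{c}\log\tfrac{l}{1-\gamma})$, giving the three stated parameters after substituting $c = 1-\gamma$ for strategy iteration and $c = (1-\gamma)/l$ for Algorithms \ref{simplex} and \ref{modified}.

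The main obstacle will be the per-step contraction for strategy iteration in the 2-TBSG setting, which is more delicate than in the MDP case because the two players pull in opposite directions: one must combine the nonnegativity of $(I - \gamma P_{\pi^{n+1}})^{-1}$ with the correct signs of $(r^{\pi^n})_{\pi^{n+1}}$ on each player's states, as given by Lemma \ref{condition}, and compare term by term with the corresponding representation for $v^{\pi^{*}} - v^{\pi^n}$; once this bound is in hand, the remaining calculations are routine algebra.
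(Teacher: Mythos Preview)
Your proposal is correct and follows essentially the same approach as the paper: verify conditions (1)--(3) directly from the algorithm descriptions, and establish condition (4) by iterating a one-step contraction of the form $\mathbf{1}^{T}(v^{\pi^{*}} - v^{\pi^{n+1}})\le (1-c)\,\mathbf{1}^{T}(v^{\pi^{*}} - v^{\pi^{n}})$, with $c=(1-\gamma)/l$ for Algorithms~\ref{simplex} and~\ref{modified} (the latter dominating the simplex step, exactly as you argue) and $c=1-\gamma$ for strategy iteration.

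The only substantive difference is in the treatment of the strategy iteration contraction: the paper does not re-derive it but simply invokes Lemma~4.8 and Lemma~5.4 of \cite{Hansen:2013:SIS:2432622.2432623} to obtain $\mathbf{1}^{T}(v^{\pi^{*}} - v^{\pi^{n+1}})\le \gamma\,\mathbf{1}^{T}(v^{\pi^{*}} - v^{\pi^{n}})$ directly. Your plan to prove this from scratch via Lemma~\ref{several}(4) and Lemma~\ref{condition} is a reasonable alternative, and you correctly flag it as the delicate part; just be aware that your one-line sketch (``strategy iteration greedily picks the componentwise reduced-cost maximizer at every state'') only describes player~1's Howard step, not player~2's optimal-counterstrategy response, so the two-player interplay you mention in your final paragraph really is where all the work lies. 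The paper sidesteps this by citation, which keeps the proof short at the cost of self-containment; your route would make the argument self-contained but requires reproducing a nontrivial piece of \cite{Hansen:2013:SIS:2432622.2432623}.
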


\begin{proof}
	\par It is easy to verify that the previous described three algorithms satisfy the first three conditions in the definition of geometrically converging algorithms. Next, we prove that all of these algorithms satisfy the last condition. For the strategy iteration algorithm, according to Lemma 4.8 and Lemma 5.4 given in \cite{Hansen:2013:SIS:2432622.2432623}, we have
	\begin{equation*}
		\mathbf{1}^{T}(v^{\pi^{*}} - v^{\pi^{n+1}})\le \gamma\mathbf{1}^{T}(v^{\pi^{*}} - v^{\pi^{n}}).
	\end{equation*}
	Hence if $M = \frac{2c_{1}}{1-\gamma}\log\frac{l}{1-\gamma} = \mathcal{O}\left(\frac{1}{1-\gamma}\log\frac{l}{1-\gamma}\right)$ ($c_{1}\ge 1$ is a constant), then we obtain
	\begin{equation*}
		\begin{aligned}
			\mathbf{1}^{T}(v^{\pi^{*}} - v^{\pi^{n+M}}) & \le \gamma^{M}\mathbf{1}^{T}(v^{\pi^{*}} - v^{\pi^{n}})\le \gamma^{-2\log_{\gamma}\frac{n}{1-\gamma}}\mathbf{1}^{T}(v^{\pi^{*}} - v^{\pi^{n}})\\
			& = \frac{(1-\gamma)^{2}}{l^{2}}\mathbf{1}^{T}(v^{\pi^{*}} - v^{\pi^{n}}),
		\end{aligned}
	\end{equation*}
	and the last condition of geometrically converging algorithms is verified.
	\par For the simplex strategy iteration algorithm,  if we choose $M = \mathcal{O}\left(\frac{l}{1-\gamma}\log\frac{l}{1-\gamma}\right)$ ($c_{2}\ge 1$ is a constant), then according to inequality \eqref{eq1} we have
	\begin{equation*}
		\mathbf{1}^{T}(v^{\pi^{*}} - v^{\pi^{n+M}})\le \frac{(1-\gamma)^{2}}{l^{2}}\mathbf{1}^{T}(v^{\pi^{*}} - v^{\pi^{n}}),
	\end{equation*}
	and the last condition of geometrically converging algorithms is verified.
	\par Finally we consider the modified simplex strategy iteration algorithm. For $n\ge 2$, let $a_{1} = \arg\max_{a\in\mathcal{A}_{1}}r^{\pi^{n}}_{a}$, where $a_{1}$ is an action of state $s_{1}$. Let 
	\begin{equation*}
		\pi_{1}'(s) = \begin{cases}
			a_{1}, &\quad \text{if }s = s_{1},\\
			\pi^{n}(s), &\quad \text{others},
		\end{cases}
	\end{equation*}
	$\pi_{2}'$ be player 2's optimal counterstrategy against $\pi_{1}'$, and $\pi' = (\pi_{1}', \pi_{2}')$. Then from inequality \eqref{eq1}, we have
	\begin{equation*}
		\mathbf{1}^{T}(v^{\pi^{*}} - v^{\pi'})\le \left(1 - \frac{1-\gamma}{n}\right)\mathbf{1}^{T}(v^{*} - v^{\pi^{n}}).
	\end{equation*}
	According to Algorithm \ref{modified}, we have
	\begin{equation*}
		\mathbf{1}^{T}v^{\pi^{n+1}}\ge\mathbf{1}^{T}v^{\pi'},
	\end{equation*}
	which leads to the following estimation:
	\begin{equation*}
		\mathbf{1}^{T}(v^{\pi^{*}} - v^{\pi^{n+1}})\le \left(1 - \frac{1-\gamma}{n}\right)\mathbf{1}^{T}(v^{*} - v^{\pi^{n}}).
	\end{equation*}
	Therefore, similar to the previous case we can choose $M = \mathcal{O}\left(\frac{l}{1-\gamma}\log\frac{l}{1-\gamma}\right)$ such that
	\begin{equation*}
		\mathbf{1}^{T}(v^{\pi^{*}} - v^{\pi^{n+M}})\le \frac{(1-\gamma)^{2}}{l^{2}}\mathbf{1}^{T}(v^{\pi^{*}} - v^{\pi^{n}}),
	\end{equation*}
	and the last condition of geometrically converging algorithms is verified.
\end{proof}

\section{Strongly Polynomial Complexity of Geometrically Converging Algorithms} \label{sec:proof}
 In this section, we develop the strongly polynomial property of geometric
 converging algorithms if the parameter $M$ is viewed as a constant.
 Slightly different from the proof in \cite{Hansen:2013:SIS:2432622.2432623}
 for the strategy $(\pi_{1}^{n}, \pi_{2}^{n})$ at the $n$-th iteration, we
 present a proof by considering the strategy $(\pi_{1}^{n}, \pi_{2}^{*})$, where
 $(\pi_{1}^{*}, \pi_{2}^{*})$ is an equilibrium strategy. We show that
 $\mathbf{1}^{T}(v^{\pi_{1}^{*}, \pi_{2}^{*}} - v^{\pi_{1}^{n}, \pi_{2}^{*}})$
 can be both upper and lower bounded by some proportion of
 $\mathbf{1}^{T}(v^{\pi_{1}^{*}, \pi_{2}^{*}} - v^{\pi_{1}^{n}, \pi_{2}^{n}})$.
 By applying the property of geometrically converging algorithms, we obtain that after a certain number of iterations, a player 1's action will disappear in $\pi_{1}^{n}$ forever.
\begin{theorem}\label{convergence}
	Any geometrically converging algorithm with a parameter $M$ finds the equilibrium strategy in
	\begin{equation*}
		\mathcal{O}(Mm)
	\end{equation*}
	number of iterations.
\end{theorem}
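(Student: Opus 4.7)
Fix an equilibrium $\pi^{*} = (\pi_{1}^{*}, \pi_{2}^{*})$ and, for each iterate, consider the hybrid strategy $\sigma^{n} = (\pi_{1}^{n}, \pi_{2}^{*})$ that pairs player~1's current choice with player~2's equilibrium response. The plan is a disappearing-action argument: I will single out, at each stage, one action of player~1 that must be absent from $\pi_{1}^{k}$ for every $k$ sufficiently later than $n$, and then conclude by pigeonhole that after $\mathcal{O}(Mm)$ iterations no suboptimal player-1 action remains, at which point the algorithm has reached an equilibrium.

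The first step is to establish the two-sided estimate
\begin{equation*}
\frac{1-\gamma}{l}\,\mathbf{1}^{T}(v^{\pi^{*}} - v^{\pi^{n}})\;\le\;\mathbf{1}^{T}(v^{\pi^{*}} - v^{\sigma^{n}})\;\le\;\mathbf{1}^{T}(v^{\pi^{*}} - v^{\pi^{n}}).
\end{equation*}
The right-hand inequality is immediate from $v^{\sigma^{n}} \ge v^{\pi^{n}}$, which holds because $\pi_{2}^{n}$ is the minimizing optimal counterstrategy against $\pi_{1}^{n}$. For the left-hand one, expanding both gaps via Lemma~\ref{several}(4) and using Lemma~\ref{condition}(3) to discard the nonpositive player-2 contribution yields
\begin{equation*}
\mathbf{1}^{T}(v^{\pi^{*}} - v^{\sigma^{n}}) = \sum_{a\in\pi_{1}^{n}\setminus\pi_{1}^{*}} x^{\sigma^{n}}_{a}(-r^{\pi^{*}}_{a}),\quad \mathbf{1}^{T}(v^{\pi^{*}} - v^{\pi^{n}}) \le \sum_{a\in\pi_{1}^{n}\setminus\pi_{1}^{*}} x^{\pi^{n}}_{a}(-r^{\pi^{*}}_{a}),
\end{equation*}
after which the term-by-term flux bounds $x^{\sigma^{n}}_{a}\ge 1$ and $x^{\pi^{n}}_{a}\le l/(1-\gamma)$ from Lemma~\ref{several}(2) supply the factor $(1-\gamma)/l$.

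The second step is to pick the worst offender $a^{*}\in\pi_{1}^{n}\setminus\pi_{1}^{*}$ maximizing $x^{\sigma^{n}}_{a}(-r^{\pi^{*}}_{a})$. Since there are at most $l$ such terms and $x^{\sigma^{n}}_{a^{*}}\le l/(1-\gamma)$, combining with the left-hand bound above yields the pointwise estimate
\begin{equation*}
-r^{\pi^{*}}_{a^{*}} \;\ge\; \frac{(1-\gamma)^{2}}{l^{3}}\,\mathbf{1}^{T}(v^{\pi^{*}} - v^{\pi^{n}}).
\end{equation*}
Suppose, for contradiction, that $a^{*}\in\pi_{1}^{n+k}$ for some $k\ge 2M$. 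Then $\mathbf{1}^{T}(v^{\pi^{*}} - v^{\sigma^{n+k}}) \ge x^{\sigma^{n+k}}_{a^{*}}(-r^{\pi^{*}}_{a^{*}}) \ge -r^{\pi^{*}}_{a^{*}}$ by Lemma~\ref{several}(2), and the upper bound of the first display applied at $n+k$ gives $\mathbf{1}^{T}(v^{\pi^{*}} - v^{\pi^{n+k}}) \ge -r^{\pi^{*}}_{a^{*}}$. On the other hand, two successive applications of the geometric convergence property, together with the fact that the optimality gap is non-increasing in $n$ (readily verified for all three algorithms listed in Proposition~\ref{prop}), force $\mathbf{1}^{T}(v^{\pi^{*}} - v^{\pi^{n+k}}) \le ((1-\gamma)^{2}/l^{2})^{2}\,\mathbf{1}^{T}(v^{\pi^{*}} - v^{\pi^{n}})$. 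Combining these inequalities gives $l \le (1-\gamma)^{2}$, a contradiction, so $a^{*}$ is permanently excluded from $\pi_{1}^{k}$ for all $k\ge n+2M$.

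Iterating, the set of player-1 actions that still appear at or after iteration $n$ strictly shrinks every $2M$ steps, and because $|\mathcal{A}_{1}|\le m$ this can happen at most $m$ times before $\pi_{1}^{n}=\pi_{1}^{*}$; the optimal-counterstrategy requirement then forces $\pi_{2}^{n}=\pi_{2}^{*}$, so the algorithm terminates in $\mathcal{O}(Mm)$ iterations. The hardest ingredient will be the lower estimate $\mathbf{1}^{T}(v^{\pi^{*}} - v^{\sigma^{n}})\ge ((1-\gamma)/l)\mathbf{1}^{T}(v^{\pi^{*}} - v^{\pi^{n}})$: the asymmetric flux comparison between $\sigma^{n}$ and $\pi^{n}$ costs a factor $l/(1-\gamma)$ on each side and is what forces the argument to consume two blocks of $M$ geometric-convergence iterations rather than one.
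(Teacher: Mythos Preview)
Your argument is essentially the paper's: the same hybrid strategy $\sigma^{n}=(\pi_{1}^{n},\pi_{2}^{*})$, the same two-sided estimate relating $\mathbf{1}^{T}(v^{\pi^{*}}-v^{\sigma^{n}})$ to $\mathbf{1}^{T}(v^{\pi^{*}}-v^{\pi^{n}})$, and the same disappearing-action pigeonhole. Two points deserve comment.

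First, a gap: the theorem is stated for \emph{every} geometrically converging algorithm, so monotonicity of the optimality gap cannot be imported from Proposition~\ref{prop}; it must follow from the axioms. The paper does this directly: using condition~(ii) ($(r^{\pi^{n}})_{\pi_{1}^{n+1}}\ge 0$) and condition~(i) ($(r^{\pi^{n}})_{\pi_{2}^{n+1}}\ge 0$ since $\pi_{2}^{n}$ is optimal against $\pi_{1}^{n}$), one gets $\mathbf{1}^{T}(v^{\pi^{n+1}}-v^{\pi^{n}})=(x^{\pi^{n+1}})^{T}r^{\pi^{n}}\ge 0$. Your parenthetical should be replaced by this one-line computation.

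Second, a sharpening: the paper selects the critical action as $a_{1}=\arg\min_{a\in\pi_{1}^{n}}r^{\pi^{*}}_{a}$ rather than your weighted choice $\arg\max_{a}x^{\sigma^{n}}_{a}(-r^{\pi^{*}}_{a})$. This yields
\[
\mathbf{1}^{T}(v^{\pi^{*}}-v^{\sigma^{n}})=-\sum_{a\in\pi_{1}^{n}}x^{\sigma^{n}}_{a}r^{\pi^{*}}_{a}\le\Bigl(\sum_{a}x^{\sigma^{n}}_{a}\Bigr)(-r^{\pi^{*}}_{a_{1}})\le\frac{l}{1-\gamma}(-r^{\pi^{*}}_{a_{1}}),
\]
so that $-r^{\pi^{*}}_{a_{1}}\ge\frac{(1-\gamma)^{2}}{l^{2}}\,\mathbf{1}^{T}(v^{\pi^{*}}-v^{\pi^{n}})$, a factor $l$ tighter than your $\frac{(1-\gamma)^{2}}{l^{3}}$. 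With this bound a \emph{single} block of $M$ iterations already produces the contradiction, so an action disappears every $M$ steps rather than every $2M$. This does not change the $\mathcal{O}(Mm)$ conclusion, but it explains why your last paragraph's remark about ``two blocks'' being forced is an artifact of the action selection, not of the underlying flux comparison.
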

\begin{proof}Suppose $\pi^{1} = (\pi_{1}^{1}, \pi_{2}^{1}), \pi^{2} = (\pi_{1}^{2}, \pi_{2}^{2}), \cdots, \pi^{n} = (\pi_{1}^{n}, \pi_{2}^{n})$ is the sequence generated by a geometrically converging algorithm. We define $\eta^{n} = (\pi_{1}^{n}, \pi_{2}^{*})$, where $\pi^{*} = (\pi_{1}^{*}, \pi_{2}^{*})$ is one of the equilibrium strategy.
\par According to Lemma \ref{condition} and the fact that $\pi_{2}^{n+1}$ is the optimal counterstrategy against $\pi_{1}^{n+1}$, and the definition of geometrically converging algorithm, we have
\begin{equation*}
	\mathbf{1}^{T}(v^{\pi^{n+1}} - v^{\pi^{n}}) = \sum_{a\in\pi_{1}^{n+1}}x^{\pi^{n+1}}_{a}r^{\pi^{n}}_{a} + \sum_{a\in\pi_{2}^{n+1}}x^{\pi^{n+1}}_{a}r^{\pi^{n}}_{a}\ge \sum_{a\in\pi_{1}^{n+1}}x^{\pi^{n+1}}_{a}r^{\pi^{n}}_{a}\ge 0,
\end{equation*}
which directly leads to
\begin{equation}\label{ineq11}
	\mathbf{1}^{T}v^{\pi^{n}}\le \mathbf{1}^{T}v^{\pi^{n+1}}.
\end{equation}
According to Lemma \ref{condition}, we have
\begin{equation*}
	\begin{aligned}
		& \mathbf{1}^{T}(v^{\pi^{*}} - v^{\eta^{n}}) = \mathbf{1}^{T}(v^{\pi_{1}^{*}, \pi_{2}^{*}} - v^{\pi_{1}^{n}, \pi_{2}^{*}}) = - (x^{\pi_{1}^{n}, \pi_{2}^{*}})^{T}r^{\pi_{1}^{*}, \pi_{2}^{*}} = - (x^{\pi_{1}^{n}, \pi_{2}^{*}})_{\mathcal{A}_{1}}^{T}r^{\pi_{1}^{*}, \pi_{2}^{*}}_{\mathcal{A}_{1}}\ge 0,\\
		& \mathbf{1}^{T}(v^{\eta^{n}} - v^{\pi^{n}}) = \mathbf{1}^{T}(v^{\pi_{1}^{n}, \pi_{2}^{*}} - v^{\pi_{1}^{n}, \pi_{2}^{n}}) = (x^{\pi_{1}^{n}, \pi_{2}^{*}})^{T}r^{\pi_{1}^{n}, \pi_{2}^{n}} = (x^{\pi_{1}^{n}, \pi_{2}^{*}})_{\mathcal{A}_{1}}^{T}r^{\pi_{1}^{n}, \pi_{2}^{n}}_{\mathcal{A}_{1}}\ge 0,
	\end{aligned}
\end{equation*}
which implies
\begin{equation}\label{ineq12}
	\mathbf{1}^{T}v^{\pi^{n}}\le \mathbf{1}^{T}v^{\eta^{n}}\le \mathbf{1}^{T}v^{\pi^{*}}.
\end{equation}
\par We next prove the following inequality:
\begin{equation}\label{ineq2}
	\mathbf{1}^{T}(v^{\pi^{*}} - v^{\eta^{n}})\ge\frac{1-\gamma}{n}\cdot\mathbf{1}^{T}(v^{\pi^{*}} - v^{\pi^{n}}).
\end{equation}
A direct calculation gives
\begin{equation*}
	\begin{aligned}
		\mathbf{1}^{T}(v^{\pi^{*}} - v^{\pi^{n}}) & = \mathbf{1}^{T}(v^{\pi_{1}^{*}, \pi_{2}^{*}} - v^{\pi_{1}^{n}, \pi_{2}^{n}}) = - (x^{\pi_{1}^{n}, \pi_{2}^{n}})^{T}r^{\pi_{1}^{*}, \pi_{2}^{*}}\\
		&  = - \sum_{a\in\pi^{n}_{1}}x^{\pi_{1}^{n}, \pi_{2}^{n}}_{a}r^{\pi_{1}^{*}, \pi_{2}^{*}}_{a} - \sum_{a\in\pi^{n}_{2}}x^{\pi_{1}^{n}, \pi_{2}^{n}}_{a}r^{\pi_{1}^{*}, \pi_{2}^{*}}_{a} \le - \sum_{a\in\pi^{n}_{1}}x^{\pi_{1}^{n}, \pi_{2}^{n}}_{a}r^{\pi_{1}^{*}, \pi_{2}^{*}}_{a},
	\end{aligned}
\end{equation*}
where the last inequality is obtained from Lemma \ref{condition}. Then noticing that
\begin{equation*}
	1\le x^{\pi_{1}^{n}, \pi_{2}^{n}}_{a}, x^{\pi_{1}^{n}, \pi_{2}^{*}}_{a}\le \frac{1-\gamma}{l}, \quad r^{\pi_{1}^{*}, \pi_{2}^{*}}_{a}\le 0, \quad \forall a\in \pi_{1}^{n},
\end{equation*}
we have
\begin{equation*}
	\begin{aligned}
		\mathbf{1}^{T}(v^{\pi^{*}} - v^{\eta^{n}}) & = \mathbf{1}^{T}(v^{\pi_{1}^{*}, \pi_{2}^{*}} - v^{\pi_{1}^{n}, \pi_{2}^{*}}) = - (x^{\pi_{1}^{n}, \pi_{2}^{*}})^{T}r^{\pi_{1}^{*}, \pi_{2}^{*}} = - \sum_{a\in\pi^{n}_{1}}x^{\pi_{1}^{n}, \pi_{2}^{*}}_{a}r^{\pi_{1}^{*}, \pi_{2}^{*}}_{a}\\
		& \ge - \frac{1-\gamma}{l}\sum_{a\in\pi^{n}_{1}}x^{\pi_{1}^{n}, \pi_{2}^{n}}_{a}r^{\pi_{1}^{*}, \pi_{2}^{*}}_{a}\ge \frac{1-\gamma}{l}\mathbf{1}^{T}(v^{\pi^{*}} - v^{\pi^{n}}).
	\end{aligned}
\end{equation*}
Then the inequality \eqref{ineq2} is proved.
\par Finally, we prove that for any $n$, either there exists an action $a_{1}$ in $\pi_{1}^{n}$ will never belong to $\pi_{1}^{n+m}$ when $m > M$, or we have 
\begin{equation*}
	\mathbf{1}^{T}(v^{\pi^{n+M+1}} - v^{\pi^{n+M}}) = 0.
\end{equation*}
Actually for any $p > M$, suppose $\mathbf{1}^{T}(v^{\pi^{n+M+1}} - v^{\pi^{n+M}}) \neq 0$, we obtain
\begin{equation*}
	\mathbf{1}^{T}(v^{\pi^{*}} - v^{\pi^{n+p}}) < \mathbf{1}^{T}(v^{\pi^{*}} - v^{\pi^{n+M}})\le \frac{(1-\gamma)^{2}}{l^{2}}\mathbf{1}^{T}(v^{\pi^{*}} - v^{\pi^{n}})
\end{equation*}
from \eqref{ineq11} and the definition of geometrically converging algorithm. Hence according to \eqref{ineq12} and \eqref{ineq2}, we get
\begin{equation}\label{ineq3}
	\mathbf{1}^{T}(v^{\pi^{*}} - v^{\eta^{n+p}})\le \mathbf{1}^{T}(v^{\pi^{*}} - v^{\pi^{n+p}}) < \frac{(1-\gamma)^{2}}{l^{2}}\mathbf{1}^{T}(v^{\pi^{*}} - v^{\pi^{n}})\le \frac{1-\gamma}{l}\mathbf{1}^{T}(v^{\pi^{*}} - v^{\eta^{n}}).
\end{equation}
Therefore, choosing $a_{1} = \arg\min_{a\in\pi_{1}^{n}}r_{a}^{\pi^{*}} \le 0$, and because for any $a\in\pi_{1}^{n}$, $r_{a}^{\pi^{*}} \le 0$ according to Lemma \ref{condition}, we obtain
\begin{equation*}
	\mathbf{1}^{T}(v^{\pi^{*}} - v^{\eta^{n}}) = -\sum_{a\in\pi_{1}^{n}}x_{a}^{\eta^{n}}r_{a}^{\pi^{*}}\le\left(\sum_{a\in\pi_{1}^{n}}x_{a}^{\eta^{n}}\right)\cdot (-r_{a_{1}}^{\pi^{*}}) \le -\frac{l}{1-\gamma}\cdot r_{a_{1}}^{\pi^{*}}
\end{equation*}
from Lemma \ref{several}. If $a\in\pi_{1}^{n+p}$, we have
\begin{equation*}
	\mathbf{1}^{T}(v^{\pi^{*}} - v^{\eta^{n+p}}) = -\sum_{a\in\pi_{1}^{n+p}}x_{a}^{\eta^{n+p}}r_{a}^{\pi^{*}}\ge - x_{a_{1}}^{\eta^{n+p}}r_{a_{1}}^{\pi^{*}}\ge - r_{a_{1}}^{\pi^{*}},
\end{equation*}
where the first inequality is due to Lemma \ref{condition} and the second inequality is due to Lemma \ref{several}. Therefore, combining these two inequalities and the inequality \eqref{ineq3} and noticing that $r_{a_{1}}^{\pi^{*}}\le 0$, we get
\begin{equation*}
	- r_{a_{1}}^{\pi^{*}}\le \mathbf{1}^{T}(v^{\pi^{*}} - v^{\eta^{n+p}}) < \frac{1-\gamma}{l}\mathbf{1}^{T}(v^{\pi^{*}} - v^{\eta^{n}})\le - \frac{1-\gamma}{l}\cdot\frac{l}{1-\gamma}\cdot r_{a_{1}}^{\pi^{*}} = -  r_{a_{1}}^{\pi^{*}}.
\end{equation*}
This leads to contradiction.
\par The previous derivation means that if $\mathbf{1}^{T}(v^{\pi^{n+M+1}} -
v^{\pi^{n+M}}) = 0$ does not hold for $n$, then an action of $\pi^{n}$ must
disappear after $\pi^{n+M}$ forever. Hence every after $M$ iterations an action
will disappear forever. This process cannot happen for more than $m - l$ times
(since there are $m$ actions and every strategy has $n$ actions), which indicates that for some $n > M(m-l)$,
\begin{equation*}
	\mathbf{1}^{T}(v^{\pi^{n+M+1}} - v^{\pi^{n+M}}) = 0.
\end{equation*}
It follows from the definition of geometrically converging algorithm that $\pi^{n+M}$ is the equilibrium strategy. This indicates that within
\begin{equation*}
	\mathcal{O}(mM)
\end{equation*}
number of iterations, we can find one of the equilibrium strategies.
\end{proof}

\par Our next theorem presents the complexity of the strategy iteration algorithm, the simplex strategy iteration algorithm and the modified simplex strategy iteration algorithm.
\begin{theorem}
	\par The following algorithms has strongly polynomial convergence when the discounted factor is fixed.
	\begin{itemize}
		\item The strategy iteration algorithm given in \cite{Hansen:2013:SIS:2432622.2432623} can find the equilibrium strategy within $\mathcal{O}\left(\frac{m}{1-\gamma}\log\frac{l}{1-\gamma}\right)$ iterations;
		\item The simplex strategy iteration algorithm (Algorithm \ref{simplex}) can find the equilibrium strategy within $\mathcal{O}\left(\frac{ml}{1-\gamma}\log\frac{l}{1-\gamma}\right)$ iterations;
		\item The modified simplex strategy iteration algorithm (Algorithm \ref{modified}) can find the equilibrium strategy within $\mathcal{O}\left(\frac{ml}{1-\gamma}\log\frac{l}{1-\gamma}\right)$ iterations;
	\end{itemize}
\end{theorem}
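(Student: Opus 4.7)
The plan is to view this theorem as a direct corollary obtained by combining Theorem \ref{convergence} with Proposition \ref{prop}. Proposition \ref{prop} identifies each of the three algorithms as a geometrically converging algorithm and supplies an explicit value of the parameter $M$, while Theorem \ref{convergence} converts any such parameter into an overall iteration bound of $\mathcal{O}(Mm)$. So the entire argument amounts to substituting the three values of $M$ into the bound $\mathcal{O}(Mm)$ and simplifying.

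Concretely, I would proceed case by case. For the strategy iteration algorithm of \cite{Hansen:2013:SIS:2432622.2432623}, Proposition \ref{prop}(1) gives $M = \mathcal{O}\!\left(\frac{1}{1-\gamma}\log\frac{l}{1-\gamma}\right)$, so Theorem \ref{convergence} yields a complexity of $\mathcal{O}(Mm) = \mathcal{O}\!\left(\frac{m}{1-\gamma}\log\frac{l}{1-\gamma}\right)$. For Algorithm \ref{simplex}, Proposition \ref{prop}(2) gives $M = \mathcal{O}\!\left(\frac{l}{1-\gamma}\log\frac{l}{1-\gamma}\right)$, whence Theorem \ref{convergence} delivers $\mathcal{O}\!\left(\frac{ml}{1-\gamma}\log\frac{l}{1-\gamma}\right)$. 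The same substitution works verbatim for Algorithm \ref{modified} using Proposition \ref{prop}(3). Since $1/(1-\gamma)$ is treated as a constant when the discount factor is fixed, each of these bounds is polynomial in $m$ and $l$ alone, which is the strongly polynomial claim.

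There is really no substantive obstacle here, since the two ingredients have already been proved. The only minor care needed is to verify that the hypothesis of Theorem \ref{convergence} is met by each algorithm, which is exactly the content of Proposition \ref{prop}; and to note that the $\mathcal{O}(Mm)$ bound from Theorem \ref{convergence} does not hide any factor depending on $\gamma$ beyond what already sits inside $M$, so the final rates are precisely those stated.
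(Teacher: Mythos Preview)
Your proposal is correct and matches the paper's own proof, which simply states that the result follows directly from Theorem \ref{convergence} and Proposition \ref{prop}. The case-by-case substitution of each $M$ into the $\mathcal{O}(Mm)$ bound is exactly the intended argument.
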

\begin{proof}
	The proof of this theorem directly follows from Theorem \ref{convergence} and Proposition \ref{prop}.
\end{proof}
\begin{remark}
	It is easy to note that the terminated condition of the simplex strategy iteration algorithm and the modified simplex strategy iteration algorithm is equivalent to the condition of meeting an equilibrium strategy. Hence the above theorem also indicates that these two algorithms terminate within $\mathcal{O}\left(\frac{ml}{1-\gamma}\log\frac{l}{1-\gamma}\right)$ iterations.
\end{remark}

\section{Transform General 2-TBSGs into Special 2-TBSGs}  \label{sec:sp}
\par We prove in this section that every 2-TBSG can be transformed into a new 2-TBSG where each state has exactly two actions. A formal description is given in the next theorem.
\begin{theorem}\label{2-TBSG}
	Given a 2-TBSG with $l$ states and $m$ actions whose  state set is $\mathcal{S}$, we can construct a new 2-TBSG with state set $\mathcal{S}'$ satisfying the following properties.
	\begin{itemize}
		\item The number of states in the constructed 2-TBSG is bounded by a polynomial of $m$ and $l$:
		\begin{equation}\label{eqin1}
			|\mathcal{S}'|\le m + l\log m = \tilde{\mathcal{O}}(m + l).
		\end{equation}
		\item $\mathcal{S}\subset\mathcal{S}'$ and the value function $V$ at the equilibrium of the constructed 2-TBSG satisfies:
		\begin{equation}\label{eqin2}
		V(s) = c\cdot v(s), \quad \forall s\in\mathcal{S},
		\end{equation}
		where $v$ is the equilibrium value function of the original 2-TBSG, and
		\begin{equation}
			c = \gamma^{\frac{\lceil\log m\rceil - 1}{\lceil\log m\rceil}}.
		\end{equation}
	\end{itemize}
\end{theorem}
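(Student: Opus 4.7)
The plan is to build, for each original state $s \in \mathcal{S}$ with $k_s$ available actions, a binary-tree gadget of uniform depth $d = \lceil \log m \rceil$ rooted at $s$, whose non-root nodes are new auxiliary states assigned to the same player as $s$ (so that $\mathcal{S}_1 \subset \mathcal{S}_1'$ and $\mathcal{S}_2 \subset \mathcal{S}_2'$). At each internal routing node the two available actions are the two tree edges to its children, each with reward $0$ and a deterministic transition; at a leaf node (depth $d-1$) the two available actions are two original actions of $s$, retaining their original rewards and transitions into the roots of other gadgets. The enlarged game uses the scaled discount factor $\gamma' = \gamma^{1/d}$. Any trajectory from one gadget root to the next uses exactly $d$ actions, so its cumulative discount is $(\gamma')^d = \gamma$, and the reward picked up on the final (leaf) action is multiplied by $(\gamma')^{d-1} = c$, which is precisely the scaling asserted in \eqref{eqin2}.

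For the state-count bound \eqref{eqin1}, I would pair up the $k_s$ original actions of $s$ into $\lceil k_s/2 \rceil$ leaf states, link them via a balanced binary subtree with at most $\lceil k_s/2\rceil - 1$ internal routing nodes, and pad the path from $s$ down to the root of that subtree by a chain of at most $d - \lceil \log k_s\rceil$ auxiliary states. Each chain state carries one routing action to the next chain state and a second strictly dominated dummy action; the same dummy trick fills any unused slot of a leaf state (arising when $k_s$ is odd). Dummies can be implemented as self-loops with reward $-R$ inside a player-$1$ gadget and $+R$ inside a player-$2$ gadget, for a sufficiently large constant $R$. Summing over $s$ yields
\begin{equation*}
    |\mathcal{S}'| \;\le\; l \;+\; \sum_{s\in\mathcal{S}}\bigl((k_s-1) + (d-\lceil\log k_s\rceil)\bigr) \;\le\; m + l\lceil\log m\rceil,
\end{equation*}
which is \eqref{eqin1}.

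For the value identity \eqref{eqin2}, take an equilibrium $\pi^* = (\pi_1^*, \pi_2^*)$ of the original game with value vector $v$ (unique by Theorem~\ref{existence}), and define a candidate strategy $\Pi^*$ in the new game as follows: at every internal routing node inside the gadget rooted at $s$, route toward the unique subtree containing $\pi^*(s)$, always choosing the non-dummy action at a chain state; at the leaf state reached, choose the original action $\pi^*(s)$. Telescoping along the length-$d$ root-to-root path, together with the inductive hypothesis $V^{\Pi^*}(s') = c\,v(s')$ at the next original state, gives
\begin{equation*}
    V^{\Pi^*}(s) = (\gamma')^{d-1} r_{\pi^*(s)} + (\gamma')^d \sum_{s'} P(\pi^*(s), s')\,V^{\Pi^*}(s') = c\,r_{\pi^*(s)} + c\gamma \sum_{s'} P(\pi^*(s), s')\, v(s') = c\,v(s),
\end{equation*}
where the last equality is the Bellman equation satisfied by $v$.

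To finish, I would verify that $\Pi^*$ is itself an equilibrium of the enlarged game, so that uniqueness of equilibrium values (Theorem~\ref{existence}) forces \eqref{eqin2} for every equilibrium. Lemma~\ref{condition} reduces this to a sign check on reduced costs: at each player-$1$ routing node the unchosen routing action points to a subtree whose best attainable leaf value is (by the maximizing property of $\pi_1^*$ in the original game) no larger than that of the chosen subtree, so the associated reduced cost is nonpositive; the symmetric argument handles player-$2$ nodes, while the dummy self-loops carry reduced costs of the favorable sign because $R$ is arbitrarily large. The main technical obstacle I anticipate is precisely the calibration of these dummy actions: they must be strictly dominated in every equilibrium of the new game, compatible with the exact scaling $(\gamma')^d = \gamma$ needed for the telescoping, and consistent with the ``two actions per state'' structural requirement, without adding states beyond the $m + l\log m$ budget. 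Once the dummies are shown to be uniformly dominated with reduced cost of the correct sign, the equilibrium check and the state count combine to give the theorem.
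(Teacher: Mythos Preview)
Your construction is the same as the paper's: for each original state $s$, build a depth-$\lceil\log m\rceil$ tree consisting of a chain of single-child states followed by a balanced binary subtree whose bottom-layer states fire the original actions of $s$, use discount $\gamma^{1/\lceil\log m\rceil}$, and pad single-action states with a strictly dominated dummy. The paper's dummy is ``same transition as the existing action, strictly worse reward for the owner'' rather than a self-loop with $\pm R$, which spares you the calibration worry you flagged, but either choice works.

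The one genuine methodological difference is the direction of the equilibrium correspondence. The paper takes an arbitrary equilibrium $\pi'$ of the \emph{constructed} game, reads off the induced original strategy $\pi$ via the ``final action'' along each root-to-leaf path, proves $V^{\pi'}(s)=\delta^{p-1}v^{\pi}(s)$ for all $s\in\mathcal{S}$, and then checks via Lemma~\ref{condition} in the \emph{original} game that $\pi$ is an equilibrium there. You go the other way: lift the original equilibrium $\pi^*$ to a strategy $\Pi^*$ in the new game and argue that $\Pi^*$ is an equilibrium. Your direction is valid, but it carries an obligation you glossed over: $\Pi^*$ must be specified at \emph{every} node of each gadget, not just on the root-to-$\pi^*(s)$ path, and Lemma~\ref{condition} must be verified at the off-path routing nodes as well (the reduced cost at an on-path branch point already depends on the off-path value). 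The natural fix is to let $\Pi^*$ at every routing node, on- or off-path, choose the child whose subtree contains the owner-optimal action among those still reachable; then your ``best attainable leaf value'' comparison becomes an honest inequality at every node. The paper's new-to-original direction sidesteps this bookkeeping, since an equilibrium of the constructed game automatically satisfies all the local sign conditions and one only has to check them at the original actions.
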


\begin{proof}
\par Our proof consists of two parts. In the first part, we construct a new 2-TBSG where each state has no more than two actions, and the value function at equilibrium of original 2-TBSG can be easily obtained given the equilibrium value of the constructed 2-TBSG (proportional to the value at some states in the constructed 2-TBSG). In the second part, we modify the constructed 2-TBSG so that each state has exactly two actions, while keeping the equilibrium value unchanged by constructing an obvious undesirable action for those states with only one action. 
\par We first construct a binary tree rooted at $s$ with exactly $|A_{s}|$ leaves, and the depth of the tree is exactly $p = \lceil\log m\rceil$. This tree is called the depth-$p$ binary tree of state $s$:
\begin{itemize}
	\item In the first $p - \log\lceil|A_{s}|\rceil$ layers, each node has only one child.
	\item In the last $\log\lceil|A_{s}|\rceil$ layers, it is a binary tree with exactly $|A_{s}|$ leaves.
	\item Every leaves has depth $p$.
\end{itemize}
\par Each node except the root $s$ and all leaves in the depth-$p$ binary tree of $s$ are assigned with a new state whose owner is same as state $s$ (player 1 or player 2). We use $S_{1}$ to denote the set of states in the first $p - 2$ layers, and $S_{2}$ to denote the set of states in the $(p-1)$-th layer. The parameters (transition probabilities, rewards, discounted factor) are given as follows:
\begin{itemize}
	\item For each state in $S_{1}$, one or two actions are assigned to it depending on how many children states (its children in the binary tree) it has, with probability 1 leading to a child state and reward 0.
	\item For set $S_{2}$, each of their children nodes is assigned with an action of $s$ in the original 2-TBSG. This can be done since the total number of children nodes of $S_{2}$ is exactly $|\mathcal{A}_{s}|$. For each state in $S_{2}$, its actions are given by its children nodes. The transition probability and reward of taking that action is assigned to be the same as in the original 2-TBSG.
	\item The discounted factor in the constructed 2-TBSG is given by $\delta = \gamma^{1/p}$.
\end{itemize}
A special case of can be viewed in Figure \ref{fig1} when $p = 4, |\mathcal{A}_{s}| = 7$.
\par It is easy to obtain that the number of states in the constructed 2-TBSG is no more than 
\begin{equation*}
	m + n\log m.
\end{equation*}
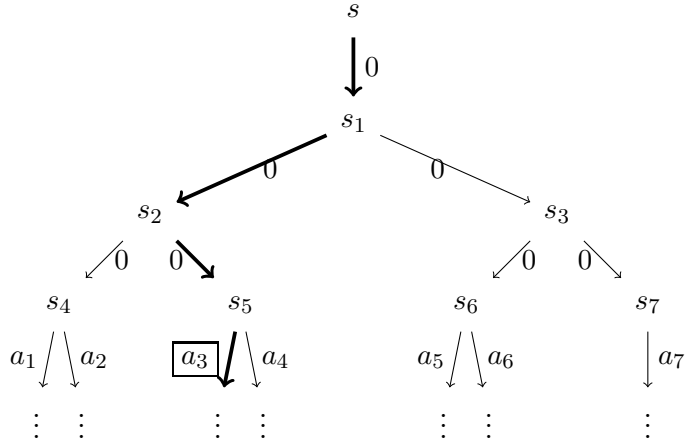
\begin{figure}
	\centering
	\setlength{\belowcaptionskip}{10pt}
	\caption{Example when $t=4$}
	\label{fig1}
	\tikzstyle{rec} = [rectangle, minimum width = 0.7cm, minimum height = 0.7cm, text centered]
    \begin{tikzpicture}
    	\node(rec0)[rec]{$s$};
    	\node(rec1)[rec, below of = rec0, yshift = -0.5cm]{$s_{1}$};
		\node(rec2)[rec, below left of = rec1, xshift = -2cm, yshift = -0.5cm]{$s_{2}$};
		\node(rec3)[rec, below right of = rec1, xshift = 2cm, yshift = -0.5cm]{$s_{3}$};
		\node(rec4)[rec, below left of = rec2, xshift = -0.5cm, yshift = -0.5cm]{$s_{4}$};
		\node(rec5)[rec, below right of = rec2, xshift = 0.5cm, yshift = -0.5cm]{$s_{5}$};
		\node(rec6)[rec, below left of = rec3, xshift = -0.5cm, yshift = -0.5cm]{$s_{6}$};
		\node(rec7)[rec, below right of = rec3, xshift = 0.5cm, yshift = -0.5cm]{$s_{7}$};
		\node(rec8)[rec, below of = rec4, xshift = -0.3cm, yshift = -0.5cm]{$\vdots$};
		\node(rec9)[rec, below of = rec4, xshift = 0.3cm, yshift = -0.5cm]{$\vdots$};
		\node(rec10)[rec, below of = rec5, xshift = -0.3cm, yshift = -0.5cm]{$\vdots$};
		\node(rec11)[rec, below of = rec5, xshift = 0.3cm, yshift = -0.5cm]{$\vdots$};
		\node(rec12)[rec, below of = rec6, xshift = -0.3cm, yshift = -0.5cm]{$\vdots$};
		\node(rec13)[rec, below of = rec6, xshift = 0.3cm, yshift = -0.5cm]{$\vdots$};
		\node(rec14)[rec, below of = rec7, yshift = -0.5cm]{$\vdots$};
		\draw[->, line width=0.5mm] (rec0)--(rec1) node[midway, right] {0};
		\draw[->, line width=0.5mm] (rec1)--(rec2) node[midway, right] {0};
		\draw[->] (rec1)--(rec3) node[midway, left] {0};;
		\draw[->] (rec2)--(rec4) node[midway, right] {0};;
		\draw[->, line width=0.5mm] (rec2)--(rec5) node[midway, left] {0};;
		\draw[->] (rec3)--(rec6) node[midway, right] {0};;
		\draw[->] (rec3)--(rec7) node[midway, left] {0};;
		\draw[->] (rec4)--(rec8) node[midway, left] {$a_{1}$};
		\draw[->] (rec4)--(rec9) node[midway, right] {$a_{2}$};
		\draw[->, line width=0.5mm] (rec5)--(rec10) node[midway, left] {$\boxed{a_{3}}$};
		\draw[->] (rec5)--(rec11) node[midway, right] {$a_{4}$};
		\draw[->] (rec6)--(rec12) node[midway, left] {$a_{5}$};
		\draw[->] (rec6)--(rec13) node[midway, right] {$a_{6}$};
		\draw[->] (rec7)--(rec14) node[midway, right] {$a_{7}$};
    \end{tikzpicture}
\end{figure}
\par We next present a definition of final actions and the executing path of a state.
\begin{definition}
\par For a given strategy $\pi'$ in the constructed 2-TBSG cases and $s\in\mathcal{S}$, we continue the following process:
\begin{itemize}
	\item $s_{0}\leftarrow s,\quad i\leftarrow 0$;
	\item If $\pi'(s_{n})$ is a constructed action (not an action in the
    original 2-TBSG), then we let $s_{n+1}$ to be the state obtained by
executing action $\pi'(s_{n})$. Since all constructed actions are deterministic,
there is only one choice of $s_{n+1}$. Then let $n\leftarrow n+1$.
	\item If $\pi'(s_{n})$ is an action in the original 2-TBSG, then we stop this process, and call $\pi'(s_{n})\in\mathcal{A}$ to be the final action of $s$, and path $s_{0}\to s_{1}\to\cdots\to s_{n}\to \pi'(s_{n})$ to be the executed path from $s$ to action $\pi'(s_{n})$.
\end{itemize}
\end{definition}
\par Notice that the previous described process must be ended in $p-1$ steps, and all states in the executed path of $s$ must lie in the depth-$p$ binary tree of $s$. For any state $s\in\mathcal{S}$ and $a\in\mathcal{A}_{s}$, there exists a unique executed path from $s$ to $a$. In Figure \ref{fig1} we present an example of final actions and executed paths. When the strategy $\pi'$ follows bold arrows, the final action of $s$ will be $a_{3}$, and the executed path from $s$ to $a_{3}$ is $s\to s_{1}\to s_{2}\to s_{5}\to a_{3}$.
\par Based on the final actions, we define the corresponding strategy $\pi$ with respect to $\pi'$ in the original 2-TBSG: for each state $s\in\mathcal{S}$, $\pi(s)$ is defined to be the final action of $s$ in $\pi'$. Next, we prove that for any state $s\in\mathcal{S}$, the value of $s$ in strategy $\pi'$ agrees with $\delta^{p-1}$ times the value of $s$ in strategy $\pi$. Actually, along the trajectory of $\pi'$, we meet a final action every $p$ steps, and only final actions have nonzero rewards. Hence values of $s$ in $\pi$ and $\pi'$ satisfy
\begin{equation*}
	V^{\pi'}(s) = \mathbb{E}_{\pi'}\sum_{i=1}^{\infty}r_{a}\cdot\delta^{pi-1} = \delta^{p-1}\mathbb{E}_{\pi}\sum_{i=1}^{\infty}r_{a}\cdot\gamma^{i-1} = \delta^{p-1}v^{\pi}(s),
\end{equation*}
where $a'$ denotes actions along strategy $\pi'$, and $a$ denotes actions along strategy $\pi$.
\par What is left in the proof is to show that if $\pi' = (\pi_{1}', \pi_{2}')$ is an equilibrium strategy in the constructed 2-TBSG, then $\pi = (\pi_{1}, \pi_{2})$ is an equilibrium strategy in the original 2-TBSG. 
 For any player 1's state $s$ and action $a\in\mathcal{A}_{s}$, we use $\eta =
 (\eta_{1}, \pi_{2})$ to denote the strategy in the original 2-TBSG: 
\begin{equation*}
	\eta_{1}(s_{1}) = \begin{cases} a &\quad \text{if }s_{1} = s,\\\pi_{1}(s_{1}) &\quad \text{otherwise}.\end{cases}
\end{equation*}
In the constructed 2-TBSG, there exists a unique executed path $T$ from $s$ to action $a$, and for any state $s_{1}$ on this path $T$, there is only one action $\tau(s_{1})$ in $\mathcal{A}'_{s_{1}}$ such that the next state when using $\tau(s_{1})$ also lies on $T$. We define player 1's strategy $\eta' = (\eta'_{1}, \pi_{2})$ as follows:
\begin{equation*}
	\eta'_{1}(s_{1}) = \begin{cases}
		\tau(s_{1})&\quad \text{if }s_{1}\in T,\\
		\lambda(s_{1})&\quad \text{if } s_{1} \text{ is in the depth-$p$ binary tree of } s \text{ and } s_{1}\not\in T,\\
		\pi_{1}'(s_{1})&\quad \text{if } s_{1} \text{ is not in the depth-$p$ binary tree of } s,
	\end{cases}
\end{equation*}
where $\lambda(s_{1})$ can be chosen $\mathcal{A}'_{s_{1}}$ arbitrarily. Then it is easy to examine that $\eta$ is the corresponding strategy of $\eta'$. Since $\eta$ is an equilibrium strategy of the constructed 2-TBSG, we have
\begin{equation*}
	v^{\eta} = \delta^{-p+1}V^{\eta'}\le \delta^{-p+1}V^{\pi'} = v^{\pi},
\end{equation*}
where the inequality is due to the property of equilibrium strategy. Furthermore, according to Lemma \ref{several} and Lemma \ref{condition}, we have $x_{a}^{\eta}r^{\pi}_{a} = \mathbf{1}^{T}(v^{\eta} - v^{\pi})\ge 0$, where $x_{a}^{\eta}\ge 1$. This indicates that $r^{\pi}_{a}\le 0$. Since $a$ can be chosen arbitrarily, we have $(r^{\pi})_{\mathcal{A}_{1}}\le 0$, and similarly $(r^{\pi})_{\mathcal{A}_{2}}\ge 0$. Again according to Lemma \ref{condition}, we obtain that $\pi$ is an equilibrium strategy of the original 2-TBSG.

\par Next, we handle states with only one actions. For each of such state, our technique is to construct another action which is obviously unacceptable to appear in equilibrium strategies. 
 For state $s$ with only one action $a_{1}$, we assign it with another action $a_{2}$:
\begin{itemize}
	\item The transition probability using action $a_{2}$ is identical to $a_{1}$.
	\item If state $s$ belongs to player 1, then the reward of $a_{2}$ is
        assigned to be smaller than the reward of $a_{1}$.
	\item If state $s$ belongs to player 2, then the reward of $a_{2}$ is
        assigned to be larger than the reward of $a_{1}$.
\end{itemize}
\par If we construct actions in such ways, it is obvious that action $a_{2}$ is
inferior to $a_{1}$ according to its owner (player 1 or player 2). Hence any
strategy which possesses $a_{2}$ is not an equilibrium strategy, since switching
action $a_{2}$ into $a_{1}$ leads to a better strategy for its owner.
Combining these two parts together proves Theorem \ref{2-TBSG}.
\end{proof}
\begin{remark}
	Since it is easy to obtain the equilibrium strategy from the equilibrium value and vice versa, we can solve the original 2-TBSG by solving the constructed 2-TBSG. 
\end{remark}

\section{Conclusion}
 In this paper, we propose two different algorithms for 2-TBSG with strongly
polynomial complexity: the simplex strategy iteration algorithm and the modified
simplex strategy iteration algorithm.  We propose a class of geometrically
converging algorithms and develop a proof technique to prove the strongly polynomial
complexity when the discounted factor is fixed. Furthermore, we present how to
transform a general 2-TBSG into a special 2-TBSG where each state has exactly
two actions. Specifically, our simplex strategy iteration algorithm is coincident with
the simplex method in the MDP cases.
 These analysis and properties shed some light on the open problem of solving the
deterministic 2-TBSG in strongly polynomial time independent of the discount
factor. 

\bibliography{reference}

\end{document}